\documentclass[11pt]{article}
\usepackage[utf8]{inputenc}
\usepackage[driverfallback=dvipdfmx,colorlinks=true,citecolor=blue,linkcolor=blue,urlcolor=blue]{hyperref}
\usepackage[left=1in, right=1in, top=1in, bottom=1in]{geometry}

\usepackage[english]{babel} 
\usepackage{amsmath,amssymb,mathrsfs}
\usepackage{bm}
\usepackage{mathtools}
\usepackage{cleveref}
\usepackage{bm}
\usepackage{multirow}
\usepackage{array}
\usepackage[dvipsnames]{color}

\usepackage{booktabs}
\usepackage{algorithm}
\usepackage{algpseudocodex}
\usepackage{comment}
\usepackage{subfigure}
\usepackage{tensor}
\usepackage{enumitem}
\usepackage{nicefrac}
\usepackage{cancel}

\usepackage{amsthm}

\newtheorem{lemma}{Lemma}

\newtheorem{problem}{Problem}

\newtheorem{remark}{Remark}
\newcommand{\norm}[1]{\left\lVert#1\right\rVert}
\newcommand{\abs}[1]{\left\lvert#1\right\rvert}

\let\P\relax
\newcommand{\P}[1]{\mathbb{P}\left[#1\right]}
\newcommand{\E}[1]{\mathbb{E}\left[#1\right]}
\newcommand{\cov}[1]{\mathrm{Cov}\left[#1 \right]}
\newcommand{\R}{\mathbb{R}}

\newcommand{\diff}{\mathrm{d}}

\DeclarePairedDelimiterX{\infdivx}[2]{(}{)}{#1\;\delimsize\|\;#2}

\newcommand{\lbar}[1]{\mskip.5\thinmuskip\overline{\mskip-.5\thinmuskip {#1} \mskip-.5\thinmuskip}\mskip.5\thinmuskip} % long bar

\crefname{align}{Eq.}{Eqs.}
\crefname{appendix}{Appendix}{Appendices}
\crefname{equation}{Eq.}{Eqs.}
\crefname{figure}{Fig.}{Figs.}
\crefname{table}{Table}{Tables}
\crefname{theorem}{Theorem}{Theorems}
\crefname{definition}{Definition}{Definitions}
\crefname{lemma}{Lemma}{Lemmas}
\crefname{remark}{Remark}{Remarks}
\crefname{assumption}{Assumption}{Assumptions}
\crefname{proof}{Proof}{Proofs}
\crefname{problem}{Problem}{Problems}
\crefname{proposition}{Proposition}{Propositions}
\crefname{remark}{Remark}{Remarks}
\crefname{section}{Section}{Sections}
\crefname{algorithm}{Algorithm}{Algorithms}

\algtext*{EndProcedure}% Remove "end while" text

\title{Convex Block-Cholesky Approach to Risk-Constrained Low-thrust Trajectory Design under Operational Uncertainty}

\author{Kenshiro Oguri%
\footnote{Assistant Professor, School of Aeronautics and Astronautics, Purdue University, West Lafayette, Indiana, 47907}\ \ 
and
Gregory Lantoine%
\footnote{Group Supervisor, Outer Planet Mission Design Group, Mission Design and Navigation Section, Jet Propulsion Laboratory, California Institute of Technology, Pasadena, California 91109.
\newline An earlier version of this paper was presented as paper 22-708 at the 2022 AAS/AIAA Astrodynamics Specialist Conference, Charlotte, NC, August 2022.}%
}
\date{}

\begin{document}

\maketitle

\begin{abstract}
Designing robust trajectories under uncertainties is an emerging technology that may represent a key paradigm shift in space mission design.
As we pursue more ambitious scientific goals (e.g., multi-moon tours, missions with extensive components of autonomy), it becomes more crucial that missions are designed with navigation (Nav) processes in mind.
The effect of Nav processes is statistical by nature, as they consist of orbit determination (OD) and flight-path control (FPC).
Thus, this mission design paradigm calls for techniques that appropriately quantify statistical effects of Nav, evaluate associated risks, and design missions that ensure sufficiently low risk while minimizing a statistical performance metric;
a common metric is $\Delta$V99: worst-case (99\%-quantile) $\Delta$V expenditure including statistical FPC efforts.
In response to the need, this paper develops an algorithm for risk-constrained trajectory optimization under operational uncertainties due to initial state dispersion, navigation error, maneuver execution error, and imperfect dynamics modeling.
We formulate it as a nonlinear stochastic optimal control problem and develop a computationally tractable algorithm that combines optimal covariance steering and sequential convex programming (SCP).
Specifically, the proposed algorithm takes a block-Cholesky approach for convex formulation of optimal covariance steering, and leverages a recent SCP algorithm, \texttt{SCvx*}, for reliable numerical convergence.
We apply the developed algorithm to risk-constrained, statistical trajectory optimization for exploration of dwarf planet Ceres with a Mars gravity assist, and demonstrate the robustness of the statistically-optimal trajectory and FPC policies via nonlinear Monte Carlo simulation.
\end{abstract}

%!TEX root = main.tex
\section{Introduction}
\label{sec:intro}

% about trajectory design under uncertainty -- paradigm shift
Designing robust trajectories under uncertainties is an emerging technology that may represent a key paradigm shift in space mission design.
As the complexity of space missions grows to pursue more ambitious scientific goals (e.g., multi-moon tours, missions with extensive components of autonomy), it becomes increasingly crucial that missions are designed with navigation (Nav) processes in mind.
The effect of Nav processes is statistical by nature, as they consist of orbit determination (OD) and flight-path control (FPC);
FPC is a process that calculates trajectory correction maneuvers (TCMs) based on delivered OD solutions.%
\footnote{We are aware that there are different interpretations/definitions of ``navigation'';
it sometimes refers to only estimation process (i.e., OD) while FPC is considered a ``guidance'' process.
In this paper, we take the definition typically used at JPL, where Nav encompasses OD and FPC.}
Any risks (e.g., collision with a planetary body during a close flyby) under these statistical effects must be appropriately quantified and mitigated in mission design processes to ensure the mission safety.
Although these risks have been historically mitigated in a rather heuristic manner (e.g., heuristic margins to constraints, manually-tuned forced coasting arcs) \cite{Rayman2007,Oh2008}, there has been a growing number of studies on directly incorporating these statistical effects into mission design and thereby coupling the trajectory optimization with Nav processes within the design process.

% existing work on robust trajectory optimization under uncertainty --- discuss their pros and cons
Similar to conventional deterministic problems, existing approaches to statistical trajectory optimization under uncertainties can be categorized into:
indirect methods \cite{Zimmer2010,Jenson2021a,Oguri2022c},
direct methods \cite{Greco2022,Ridderhof2020c,Oguri2022f,Benedikter2022a,Varghese2025,Kumagai2025}
and
differential dynamic programming (DDP) \cite{Ozaki2019,Yuan2024}.
This list of papers are certainly not exhaustive, but it already highlights the growing interest on this topic in the space mission design community.
See \cite{Kumagai2025} and its earlier version \cite{Kumagai2024b} for further discussion that compares these different methods.
Key enablers of these techniques lie in the recent advances in stochastic optimal control and optimization communities, particularly in
\textit{optimal covariance steering} \cite{Okamoto2018a,Chen2018c,Okamoto2019a,Ridderhof2020,Aleti2023,Liu2024,Pilipovsky2024a}
and
\textit{computational control} based on convex programming \cite{Acikmese2007,Ackmese2011,Mao2016,Bonalli2022a,Oguri2023b},
and their application to aerospace problems \cite{Lu2013,Szmuk2020,Oguri2021e,Oguri2024,Ra2024,Sagliano2024}.
Unlike traditional optimal control, optimal covariance steering consider controlling the mean and covariance of the system's state while minimizing a statistical cost, which depends on both the state mean and covariance;
it often also involves statistical constraints known as \textit{chance constraints} (or equivalently, \textit{risk constraints}), which imposes constraints on probability that constraints be satisfied under state uncertainties.
Either explicitly or implicitly, many existing techniques for trajectory optimization under uncertainties build on the concept of optimal covariance steering, solved by indirect methods, direct methods, or DDP.

% more discussion on SOTA of covariance steering and their pros and cons
A particularly important aspect to consider when developing an algorithm for trajectory optimization under uncertainty is, \textit{what specific approach we choose to formulate the optimal covariance steering problem}.
First introduced in 1980s by Hotz and Skelton \cite{HOTZ1987}, the theory of optimal covariance steering has seen rapid advances in the stochastic optimal control community in recent years, including
continuous-time covariance steering \cite{Chen2018c}
and
discrete-time covariance steering with chance constraints \cite{Okamoto2018a}.
Discrete-time covariance steering may find more relevant applications in aerospace, considering the nature of discrete opportunities for control commands and measurements.
It is therefore beneficial to further discuss some variants of discrete-time covariance steering, such as
one with input constraints \cite{Bakolas2018},
one without the need for history feedback \cite{Okamoto2019a},
one with output feedback \cite{Ridderhof2020},
and
output-feedback covariance steering without history feedback \cite{Aleti2023}.
In particular, this \textit{output-feedback} covariance steering holds the key to unlocking the potential of incorporating the OD process (i.e., taking measurements and applying filtering to estimate the spacecraft orbit) into the framework of trajectory optimization under uncertainties.
Notably, the discrete-time covariance steering studies mentioned in this paragraph formulate the problem in convex form by forming a large block-matrix for the system equation and utilizing the Cholesky factor of covariance matrices. 
This class of approaches are called \textit{block-Cholesky formulation} in this paper.

An important caveat of the \textit{block-Cholesky formulation} lies in its computational complexity.
As discussed in \cite{Pilipovsky2024a,Oguri2022f,Kumagai2025}, their computational complexity is roughly proportional to $(n_x N^2)$, where $n_x$ is the size of the state vector and $N$ is the number of discretized trajectory segments.
Hence, recent studies have developed a more computationally-efficient formulation of discrete-time covariance steering \cite{Liu2024} and its output-feedback variant \cite{Pilipovsky2024a}, which have been successfully applied to robust space trajectory optimization under uncertainties \cite{Benedikter2022a,Kumagai2024b,Kumagai2025}.
These approaches propagate the full covariance (rather than its Cholesky factor) and do not rely on the block-matrix formulation, and hence called \textit{full-covariance formulation} in this paper.
The main innovation of these studies lies in its elegant lossless convexification of originally non-convex covariance propagation equations by using the Schur complement and the Karush-Kuhn-Tucker (KKT) conditions.
As a result, the \textit{full-covariance formulation} enables us to formulate the discrete-time covariance steering problem in convex form, with its computational complexity roughly proportional to $N (n_x + n_u )$, where $n_u$ is the size of the control vector.
This implies that the full-covariance formulations is more beneficial for larger trajectory optimization problems (i.e., greater $N$).

% scaling issue + non-convex chance constraints + objective function
Although the full-covariance approach has a key advantage over block-Cholesky approaches in terms of the computational complexity, they also have critical drawbacks that warrant careful considerations for space trajectory optimization.
The drawbacks are:
(1) numerical ill-conditioning,
(2) non-convex chance constraints,
and
(3) inflexible cost functions.
The first point, numerical ill-conditioning, is related to the large scale differences that are inherent to space trajectories, especially in interplanetary trajectories.
To illustrate this aspect, let us consider spacecraft on a trajectory from Mars to a main-belt asteroid with state uncertainty of 10 km in position and 1 m/s in velocity (standard deviation).
While the magnitudes of these position/velocity uncertainty variances are of $\{100\ \mathrm{km}^2, 1\times10^{-6}\ \mathrm{km^2/s^2}\}$, the magnitude of position vector measured from Sun is in the order of $\sim 2{-}4\times 10^8$ km;
this implies that the numerical solver would need to deal with a problem of condition number as large as $\sim 1\times10^{14} $ and may not be as reliable as desired.%
\footnote{If we normalize the problem to make both 1 AU and 1 $\mu_{\mathrm{sun}}$ equal to unity, the condition number might become even worse. In the case of the example above, the variances of position and velocity become approximately $\{4.4\times10^{-15}, 1.1\times10^{-17}\}$ against 2-3 AU.}
In fact, this type of numerical ill-conditioning is the main motivation for square-root formulations of OD filters (e.g., square-root unscented Kalman filter \cite{Geeraert2017} and square-root information filter \cite{Tapley2004}), which can recover up to a half of the digits of precision that would have been lost otherwise.

The second drawback, non-convex chance constraints, is due to the use of full covariance matrices as decision variables.
As demonstrated in existing studies on chance-constrained control \cite{Okamoto2018a,Okamoto2019a,Ridderhof2020,Aleti2023,Oguri2024,Ra2024}, we can formulate many chance constraints in convex form \textit{if} the Cholesky factor of covariance matrices are decision variables.
On the other hand, full-covariance approaches directly handle covariance matrices, which is a nonlinear function of the Cholesky factor, hence making those chance constraints non-convex, including control magnitude chance constraints and half-plane state chance constraints, to name a few.
This fact requires us to approximate chance constraints at each iteration via either linearization \cite{Benedikter2022a,Kumagai2024b,Kumagai2025} or convex-concave procedure \cite{Pilipovsky2024a}.

Lastly, the issue of inflexible cost functions arises from the lossless convexification process in the formulation of full-covariance approaches.
The lossless property, which is crucial for convexifying the full covariance propagation equation, relies on the KKT conditions, in particular their complementary slackness condition.
To meet the KKT conditions, \cite{Liu2024,Pilipovsky2024a} assume a specific form of objective function, and it is shown \cite{Rapakoulias2023} that their full-covariance approaches cannot consider the $\Delta$V99 cost metric (99\%-quantile $\Delta$V, representing worst-case $\Delta$V under statistical FPC), which is a typical cost in space mission design.
\cite{Kumagai2025} proposes a middle-ground solution that relaxes the inflexibility and enables approximate formulation of the $\Delta$V99 cost.

% what we propose -- what limitations we address: 
Given these observations, this paper develops an approach based on block-Cholesky formulation for robust trajectory optimization under uncertainties.\footnote{A recent paper of the first author of this paper develops an algorithm based on a full-covariance approach \cite{Kumagai2025}, if the readers are interested in the other formulation too.}
Main contributions of the present paper are threefold.
First, this paper leverages a recent formulation of block-Cholesky output-feedback covariance steering \cite{Oguri2024} and a sequential convex programming (SCP) algorithm \texttt{SCvx*} \cite{Oguri2023b} to develop an optimization algorithm for risk-constrained low-thrust trajectory design under uncertainties.
The algorithm incorporates various operational uncertainties due to navigation errors, maneuver execution errors, initial state dispersion, and imperfect dynamics modeling.
Second, we extend the developed algorithm to incorporate planetary gravity assists (GAs), which is a must-have option in any modern interplanetary mission design.
In particular, we utilize the patched-conic GA model in this study.
Third, the developed algorithm introduces some improvements in its mathematical formulation, including:
a more general class of augmented Lagrangian functions than introduced in the original \texttt{SCvx*} paper \cite{Oguri2023b}, which can smoothly approximate the $l_1$-penalty function for better convergence;
and
a formulation of three-dimensional control magnitude chance constraints that provide tighter approximation than existing studies \cite{Ridderhof2020c,Benedikter2022a,Zhang2023,Yuan2024} and are rigorously conservative compared to a heuristic approach taken in \cite{Ozaki2019}.
See \cref{f:concept} for a conceptual illustration of the proposed approach.

The rest of the paper is structured as follows.
After \cref{sec:EoM} introduces the system equation, \cref{sec:SOCformulation} details the stochastic optimal control formulation that incorporates operational uncertainties and statistical risk constraints.
\cref{sec:tractableFormulation} presents a tractable convex approximation of the stochastic optimal control problem based on a block-Cholesky approach, which simultaneously optimizes the reference trajectory and FPC policies under the Kalman filter process and linear FPC policies.
\cref{sec:solutionMethod} develops a SCP algorithm that leverages the algorithmic framework of \texttt{SCvx*} and iteratively solves the convex subproblem to march toward an optimal solution.
\cref{sec:examples} applies the developed algorithm to robust trajectory optimization for exploration of dwarf planet Ceres with a Mars gravity assist, and demonstrates the robustness of the obtained statistically-optimal trajectory and associated FPC policies via nonlinear Monte Carlo simulation.

%!TEX root = main.tex
\section{Equations of Motion}
\label{sec:EoM}

This paper considers low-thrust spacecraft orbital motions with a simplified low-thrust model that ignores the mass decrease due to thrusting.
This assumption is in line with most of the existing approaches to robust space trajectory optimization under uncertainty, including \cite{Ridderhof2020c,Ozaki2019,Benedikter2022a,Jenson2021a}.
While some studies, such as \cite{Oguri2022f,Benedikter2023,Varghese2025}, consider the mass flow equation and associated uncertainty, they also point out that the resulting distribution of the mass uncertainty becomes highly non-Gaussian and in fact take the form of generalized chi-squared distribution.
It is well-known that the generalized chi-squared distribution does not have closed-form expression for the probability density function, making rigorous characterization of mass uncertainty less straightforward.
Developing an approach to properly incorporating the effect of mass uncertainty is an important direction of future work.

Thus, letting $\bm{x}\in\R^{n_x} $ denote the orbital state (e.g., position/velocity, osculating orbital elements, etc) and $\bm{u}\in\R^3$ be the control acceleration due to the low-thrust engine, we can express the equations of motion in a generic form as follows:
\begin{align} \begin{aligned}
\dot{\bm{x}} =& \bm{f}(\bm{x}, \bm{u}, t) = 
\bm{f}_0(\bm{x},t) + F(\bm{x})\bm{u} 
\label{eq:genericEoM}
\end{aligned} \end{align}
where $\bm{f}_0(\cdot,\cdot):\R^{n_x}\times\R\mapsto\R^{n_x} $ represents natural orbital dynamics;
$F(\cdot): \R^{n_x} \mapsto \R^{n_x\times 3} $ returns a matrix that maps the acceleration to the rate of the state change.
\cref{eq:genericEoM} encompasses equations of motion describing low-thrust orbital dynamics in various coordinate systems.

Specifically, this study uses the Cartesian coordinate system to express the system state and considers perturbed two-body dynamics.
Then, the orbital state and the dynamics functions are given by:
\begin{align}
\begin{aligned}
\bm{x} = 
\begin{bmatrix}
\bm{r} \\ \bm{v}
\end{bmatrix}
,\quad
\bm{f}_0 (\bm{x},t) = 
\begin{bmatrix}
\bm{v} \\ -\frac{\mu}{\norm{\bm{r}}_2^3}\bm{r} + \bm{a}_{\mathrm{pert}}(\bm{x},t)
\end{bmatrix},\quad
F = 
\begin{bmatrix}
0_{3\times 3} \\ I_3
\end{bmatrix}
\end{aligned}\end{align}
where $\bm{r}\in\R^3$ and $\bm{v}\in\R^3$ are the spacecraft position and velocity in Cartesian coordinates, respectively;
$\mu$ is the gravitational parameter of the central body;
$\bm{a}_{\mathrm{pert}}(\bm{x},t)$ is the perturbing acceleration.

%!TEX root = main.tex
\section{Risk-Constrained Nonlinear Stochastic Optimal Control Problem}
\label{sec:SOCformulation}

\begin{figure}[tb]
\centering 
\includegraphics[width=\linewidth]{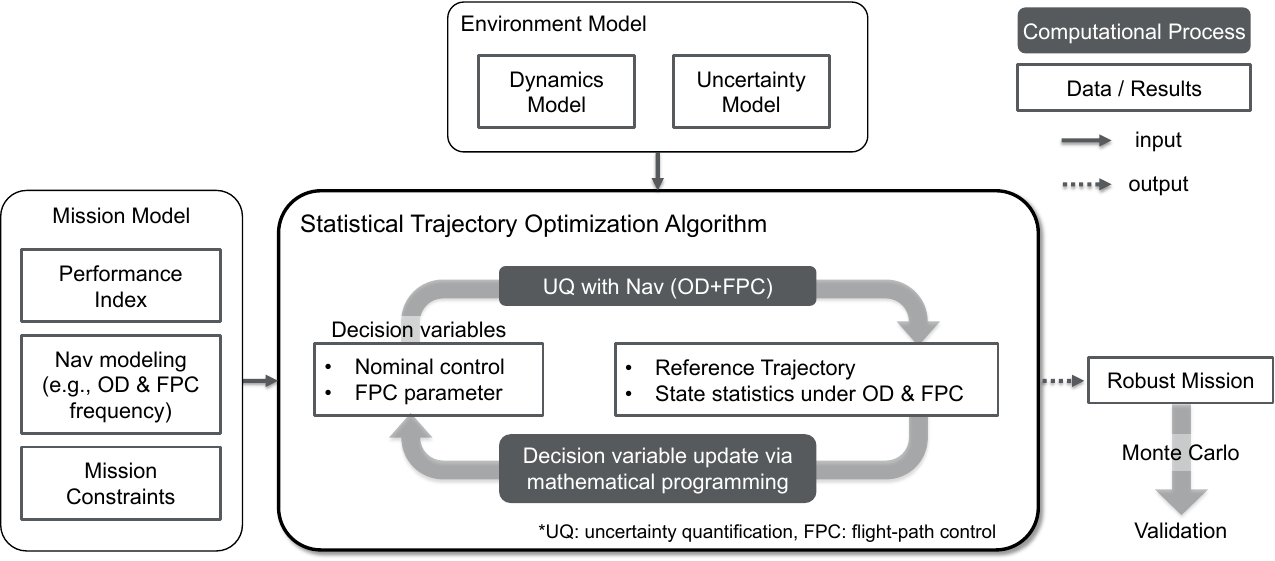}
	\caption{\label{f:concept} Fundamental idea of statistical trajectory optimization under uncertainty}
\end{figure}

\cref{f:concept} presents the fundamental idea behind the proposed formulation for risk-constrained trajectory optimization under uncertainties.
The formulation directly incorporates operational uncertainty models (e.g., measurement function, measurement noise, OD frequency) as well as statistical risk constraints (e.g., safety with 99.9\% confidence) while minimizing the typical statistical cost, $\Delta$V99.
Considering other statistical costs is straightforward due to the flexibility of block-Cholesky approaches as opposed to full-covariance approaches.
The developed algorithm then solves for statistically optimal low-thrust trajectories under the modeled uncertainties by utilizing the mean and covariance information derived from the Kalman filter process, and simultaneously optimizing the reference trajectory and FPC policies via SCP.

This section formulates the problem of risk-constrained low-thrust trajectory optimization under uncertainty in a form of chance-constrained optimal control.
Our formulation incorporates the influence of major uncertain errors in space missions (initial state dispersion, maneuver execution error, and OD error) as well as the corrective effect of FPC efforts.

\subsection{Nonlinear Stochastic System}

\subsubsection{Initial State Dispersion}

Any space trajectories begin with initial state dispersion, such as due to launch dispersion and orbit insertion errors.
We model such initial state dispersion via a Gaussian distribution as:
\begin{align}
\bm{x}_0 \sim \mathcal{N}(\lbar{\bm{x}}_0, P_0)
\end{align}

\subsubsection{Maneuver Execution Error}

Any maneuvers involve some errors that are uncertain by nature, due to thruster misalignment, over/under burn, among other causes.
A common approach to modeling such errors is the Gates model \cite{Gates1963}, which captures the error that is proportional to the applied thrust commands as well as that is fixed irrespective of the commands.
Using the Gates model, for a given commanded control $\bm{u} $, a maneuver execution error, denoted by $\widetilde{\bm{u}}$, is given as \cite{Oguri2021e}:
\begin{align}\begin{aligned}
&
\widetilde{\bm{u}} = G_{\mathrm{exe}}(\bm{u}) \bm{w}_{\mathrm{exe}}
,\quad
G_{\mathrm{exe}}(\bm{u}) = T(\bm{u}) P_\mathrm{gates}^{1/2}(\bm{u})
,
\quad  
\bm{w}_{\mathrm{exe}} \sim \mathcal{N}(0, I_3),
\label{eq:executionError}
\end{aligned}\end{align}
where
$\bm{w}_{\mathrm{exe}} $ are independent and identically distributed (i.i.d.) standard Gaussian vectors, i.e., $\E{\bm{w}_{\mathrm{exe}}}=0 $ and $\E{\bm{w}_{\mathrm{exe}}(t_i)\bm{w}_{\mathrm{exe}}(t_j)^\top} = \delta_{i,j}I  $, where $\delta_{i,j}$ is the Kronecker delta.
$P_\mathrm{gates}(\cdot)$ denotes the execution error covariance in the local coordinates that are defined with respect to the thrust direction while $T(\cdot)$ rotates the direction to the coordinates we propagate our system, given by
\begin{align}\begin{aligned}
&
P_\mathrm{gates}(\cdot) =
\mathrm{diag}(\sigma_p^2, \sigma_p^2,  \sigma_m^2 )
,\quad 
T(\cdot) = 
\begin{bmatrix}
\hat{\bm{S}} & \hat{\bm{E}} & \hat{\bm{Z}}
\end{bmatrix}
,\quad
\hat{\bm{Z}} = \frac{\bm{u}}{\norm{\bm{u}}_2} 
,\quad 
\hat{\bm{E}} = \frac{[0,0,1]^\top \times \hat{\bm{Z}}}{\norm{[0,0,1]^\top \times \hat{\bm{Z}}}_2}
,\ 
\hat{\bm{S}} = \hat{\bm{E}} \times \hat{\bm{Z}}
\label{eq:GatesModel}
\end{aligned}\end{align}
Here, 
$\sigma_p^2 = \sigma_3^2 + \sigma_4^2 \norm{\bm{u}}_2^2 $ and $\sigma_m^2 = \sigma_1^2 + \sigma_2^2 \norm{\bm{u}}_2^2$, where $\{\sigma_1, \sigma_2\} $ are \{fixed, proportional\} magnitude errors while $\{\sigma_3, \sigma_4\}$ are \{fixed, proportional\} pointing errors.
It is clear that the terms $\sigma_2^2 \norm{\bm{u}}_2^2 $ and $\sigma_4^2 \norm{\bm{u}}_2^2 $ take greater values for larger control command, and hence represent proportional errors.

\subsubsection{Stochastic Differential Equations}
To incorporate uncertain errors and FPCs in our mission design process, one of the most straightforward approaches is to express our system via nonlinear stochastic differential equations (SDEs).
Based on the equations of motion \cref{eq:genericEoM}, our stochastic system can be formally expressed via a set of nonlinear SDEs as:
\begin{align}\begin{aligned}
\diff\bm{x} =& 
[\bm{f}(\bm{x}, \bm{u}, t) + F(\bm{x}) \widetilde{\bm{u}}] \diff t + G(\bm{x})\diff\bm{w}(t),
\label{eq:SDE}
\end{aligned}\end{align}
where $G(\cdot): \R^{n_x} \mapsto \R^{n_x\times n_w} $ represents the intensity of disturbances, mapping $\diff\bm{w}(t)$ to the random variation of the state;
$\diff\bm{w}(t):\R\mapsto \R^{n_w} $ is a standard Brownian motion vector, i.e., $\E{\diff\bm{w}}=0 $ and $\E{\diff\bm{w}(t)\diff\bm{w}^\top(t)} = I\diff t  $.
The form $G(\cdot)\diff\bm{w}$ encompasses various types of disturbances, including unmodeled external perturbations (e.g., solar radiation pressure, momentum de-saturation operations, etc).
The term $F(\bm{x}) \widetilde{\bm{u}}$ captures the effect of maneuver execution errors to the variation of state uncertainty.

\subsection{Nonlinear Stochastic System with Navigation Uncertainty}

Let us then model the Nav processes (OD and FPC).
We model the FPC policy such that calculates trajectory corrections based on imperfect state knowledge given by orbit determination (OD) processes.

\subsubsection{Orbit Determination Model}
We model the OD process as a sequence of filtering processes with discrete-time observations.
Assuming that a set of spacecraft tracking data becomes available at time $t_k$, the observation process is given by:
\begin{align}\begin{aligned}
\bm{y}_k =& \bm{f}_{\mathrm{obs}}(\bm{x}_k, t_k) + G_{\mathrm{obs}}(\bm{x}_k) \bm{w}_{\mathrm{obs},k},
\quad k=0,1,...,N
\label{eq:SOCobservations}
\end{aligned}\end{align}
where $\bm{y}_k\in\R^{n_y} $ are the (collection of) measurements;
$\bm{f}_{\mathrm{obs}}(\cdot,\cdot): \R^{n_x}\times\R\mapsto \R^{n_y} $ and $G_{\mathrm{obs}}(\cdot): \R^{n_x}\mapsto \R^{n_y\times n_y} $ model the observation process.
$\bm{w}_{\mathrm{obs},k}\in\R^{n_y} $ is measurement noise, modeled as a sequence of i.i.d.~standard Gaussian random vectors.
% $k(=0,1,...,N)$ denotes the discrete time steps where observations $\bm{y}_k$ are available.

At time $t_k$, the OD solution, $\hat{\bm{x}}_k$, is obtained by a filtering process with the past observations $\bm{y}_i (i=0,1,...,k)$ as follows:
\begin{align}\begin{aligned}
\hat{\bm{x}}_k = \mathcal{F}_k(\hat{\bm{x}}_0^{-}, \bm{y}_i: i=0,1,...,k),
\label{eq:SOCfiltering}
\end{aligned}\end{align}
where the ``$-$'' superscript indicates a quantity right before the measurement update;
$\hat{\bm{x}}_0^{-} $ denotes the initial state estimate.
% The estimation error is denoted by $\tilde{\bm{x}}_k = \bm{x}_k - \hat{\bm{x}}_k $.
Filtering typically utilizes the innovation process $\widetilde{\bm{y}}_k^{-}$, defined as:
\begin{align}\begin{aligned}
\widetilde{\bm{y}}_k^{-} = \bm{y}_k - \bm{f}_{\mathrm{obs}}(\hat{\bm{x}}_k^-)
\label{eq:innovationProcess}.
\end{aligned}\end{align}

\subsubsection{Flight-path Control Model}
Flight-path controls are a collection of efforts that drive the orbital state deviation back to the nominal trajectory, planed based on imperfect OD solutions.
Hence, this study models FPCs over the course of a trajectory by a sequence of feedback policies that compute necessary control modifications based on the current state estimate given by \cref{eq:SOCfiltering}.
The feedback policies are thus expressed as:
\begin{align}\begin{aligned}
\bm{u}_k = \pi_k(\hat{\bm{x}}_k, \Omega_k),
\label{eq:SOCpolicy}
\end{aligned}\end{align}
where $\Omega_k$ denotes a set of parameters used to compute the corrective controls $\bm{u}_k$ according to the policy $\pi_k$.
% $\bm{x}^{*}$ and $\bm{u}^{*}$ denote the nominal trajectories of the state and control, respectively.

\subsection{Original Problem}

\subsubsection{Statistical Cost Metric}

A common cost function in space trajectory optimization is the total control effort.
However, we may not directly minimize the control effort in stochastic setting because it is not well-defined due to the stochasticity of $\bm{u}$ under the feedback policy \cref{eq:SOCpolicy}.
Thus, this study considers minimizing the $p$-quantile of the closed-loop control effort, i.e.,
\begin{align} \begin{aligned}
J = \int_{t_0}^{t_f} Q_{X\sim\norm{\bm{u}}_2}(p)
\label{eq:SOCobjective99fuel}
\end{aligned} \end{align}
where 
$Q_{X}(p)$ is the quantile function of a random variable $X$ evaluated at probability $p$, formally defined as:
\begin{align}
Q_{X}(p) = \min\{x\in\R \mid \P{X \leq x} \geq p\}.
\label{eq:defQuantile}
\end{align}
Note that using $p=0.99$ corresponds to minimizing 99\%-quantile of control cost, i.e., $\Delta$V99.

\subsubsection{Risk Constraints}
The constraints considered in this study include path constraints and terminal constraints.
As our state and control are subject to uncertainty, those constraints are not deterministic anymore and need to be treated stochastically.

In particular, we consider a class of probabilistic constraints called \textit{chance constraints} to formulate our path constraints, as they naturally represent stochastic analogs of deterministic constraints classically considered in mission design.
Intuitively, a chance constraint imposes a bound on the probability that a particular constraint be satisfied (i.e., no collision with an obstacle with a probability greater than $99.9\%$);
see, for instance, \cite{Oguri2021e,Greco2022a,Benedikter2022,Oguri2022c,Kumagai2024b} and references therein, for the use of chance constraints in the context of space trajectory optimization.

Mathematically, path chance constraints imposed on discrete-time state and control are expressed as:
\begin{align} \begin{aligned}
\P{c_j(\bm{x}, \bm{u}, t) \leq 0} \geq 1 - \varepsilon_j,\quad
\label{eq:SOCconstraints}
\end{aligned} \end{align}
where $\P{\cdot}$ denotes the probability operator, $c_j(\cdot): \R^{n_x}\times\R^{3}\times\R \mapsto \R  $ represents the $j$-th nonlinear constraint function, and $\varepsilon_j$ is a risk bound associated with the $j$-th constraint (e.g., $\varepsilon_j=0.001$ for $99.9\%$ confidence).
This imposes the constraint $c_j\leq 0$ to be satisfied with probability greater than or equal to $1-\varepsilon_{j}$.

Similarly, we model the terminal constraints as distributional constraints that ensure that the final state is within a prescribed region.
Specifically, we impose terminal constraints on the final state, $\bm{x}_N$, to achieve the target $\bm{x}_f$ on average with prescribed accuracy represented by the final covariance $P_f$:
\begin{subequations}
\begin{align}
&\E{\bm{x}_N} - \bm{x}_f = 0,
\label{eq:terminalMeanOriginal}
\\
&\cov{\bm{x}_N} \preceq P_f 
\label{eq:terminalCovarianceOriginal}
\end{align}
\label{eq:SOCterminal}
\end{subequations}
where $\E{\cdot}$ and $\cov{\cdot}$ denotes the expectation and covariance operators, respectively.
Mathematically, $\E{\bm{x}} = \int \bm{x}p(\bm{x})\diff \bm{x} $ where $p(\bm{x})$ is a probability density function (pdf) of $\bm{x}$, and $\cov{\bm{x}} = \E{(\bm{x} - \E{\bm{x}})(\bm{x} - \E{\bm{x}})^\top} $.

Thus, letting $\bm{\Theta}$ represent a vector of relevant parameters to optimize (which take different forms depending on specific problems), our original problem can be expressed as in \cref{prob:originalSOC}.
\begin{problem}[Original problem]
\label{prob:originalSOC}
Find $\bm{x}^{*}(t)$ $\pi_k, \forall k $, and $\bm{\Theta}$ that minimize the cost \cref{eq:SOCobjective99fuel} subject to the dynamical constraint \cref{eq:SDE}, path chance constraints \cref{eq:SOCconstraints}, and terminal constraint \cref{eq:SOCterminal} under a sequence of FPCs determined by \cref{eq:SOCpolicy} with the state estimates \cref{eq:SOCfiltering} based on observations \cref{eq:SOCobservations}.
\end{problem}

%!TEX root = main.tex
\section{Tractable Formulation via block-Cholesky Covariance Steering}
\label{sec:tractableFormulation}

In general, \cref{prob:originalSOC} is intractable to solve.
A major bottleneck of the difficulty lies in the process of \textit{uncertainty quantification} (UQ) to evaluate the statistics of the state, cost, and constraints under uncertainty, which is computationally expensive in general, except for some special cases.
In nonlinear trajectory optimization, it is most likely the case that we must iteratively perform such expensive UQ processes to evaluate the functions and their gradients with respect to decision variables.
This requires an efficient and reliable solution method.

Among multiple possible approaches,
this study proposes a solution method based on optimal covariance steering and sequential convex programming (SCP) to solve our problem.
The proposed approach approximately formulates the originally intractable stochastic problem in a convex, deterministic form at each successive iteration;
in particular, we employ a block-Cholesky formulation as discussed in \cref{sec:intro}.
At each iteration, the original problem is approximated as a convex subproblem, which is solved with guaranteed convergence to the global optimum of the approximated problem due to the convex property.
Once the convergence of a convex subproblem is achieved, then the subproblem solution is used as a new reference trajectory for the next iteration.
Leveraging the reliability and efficiency of convex programming, this solution method provides us with a vehicle for overcoming the major bottleneck of numerical solutions to nonlinear stochastic optimal control problems, at the cost of approximate dynamical evolution of the stochastic state deviations from the reference trajectory.
Note that the dynamical feasibility of the reference trajectory is not compromised in our solution method, and the designed reference trajectory satisfies the nonlinear dynamics up to the user-specified tolerance.

Let us first approximately reformulate \cref{prob:originalSOC} in a deterministic, convex form in the following.

\subsection{Linear State Statistics Dynamics}
\label{sec:linearStatistics}

To obtain the convex subproblem, \cref{eq:SDE} is first linearized and discretized assuming fixed-time problems (i.e., $t_0$ and $t_f$ not part of optimization variables).
It is possible to extend the formulation presented in this article to variable-time problems by incorporating the adaptive-mesh mechanism \cite{Kumagai2024c}.
Such an extension is left as exciting future work, as it would allow the FPC policies (which are being optimized) to adjust maneuver epochs based on OD solutions.

\subsubsection{Linear, Discrete-time System}
The initial \textit{a priori} state estimate $\hat{\bm{x}}_0^{-}$ and its error $\widetilde{\bm{x}}_0 ^{-}$ (e.g., state dispersion estimate immediately after deployment from a launcher) are assumed to be Gaussian-distributed, i.e., $\hat{\bm{x}}_0^{-}\sim\mathcal{N}(\lbar{\bm{x}}_0, \hat{P}_0^{-}) $ and $\widetilde{\bm{x}}_0 ^{-}\sim\mathcal{N}(0, \widetilde{P}_0^-) $, where $\lbar{(\cdot)}$ indicates the mean of a random variable.
It implies that the initial state $\bm{x}_0^{-} $ is distributed as $\bm{x}_0 = \hat{\bm{x}}_0^{-} + \widetilde{\bm{x}}_0 ^{-} \sim \mathcal{N}(\lbar{\bm{x}}_0, \hat{P}_0^{-} + \widetilde{P}_0^-) $.
Linearizing the system \cref{eq:SDE} and discretizing the time into intervals $t_0 < t_1 < t_2 < ... < t_N = t_f$ with zeroth-order-hold (ZoH) control, i.e., $\bm{u}(t) = \bm{u}_k,\ \forall t\in[t_k, t_{k+1}) $, yields \cite{Oguri2022d,Oguri2024}:
\begin{align}\begin{aligned}
\bm{x}_{k+1} = A_k\bm{x}_k + B_k \bm{u}_k + \bm{c}_k + G_{\mathrm{exe}, k} \bm{w}_{\mathrm{exe}, k}  + G_{k} \bm{w}_{k} 
,\quad
k=0,1,...,N-1
\label{eq:SDElinDiscrete}
\end{aligned}\end{align}
where $A_k, B_k, \bm{c}_k$, and $G_{k}$ represent linear system matrices that are obtained by linearizing about the reference state $\bm{x}^{\mathrm{ref}}(t) $ and control $\bm{u}^{\mathrm{ref}}(t) $ and discretizing the system in time (see Ref.~\cite{Oguri2024} for the definition);
$\bm{w}_{k} $ is an i.i.d.~standard Gaussian random vector.

\subsubsection{Filtered State Dynamics}

Since the state is initially Gaussian-distributed and obeys the linear dynamics in a convex subproblem, the optimal, unbiased state estimate can be obtained by the Kalman filter.
Thus, the filtering process \cref{eq:SOCfiltering} is given as:
\begin{align}\begin{aligned}
&
\begin{cases}
\hat{\bm{x}}_{k}^{-} = A_{k-1}\hat{\bm{x}}_{k-1} + B_{k-1} \bm{u}_{k-1} + \bm{c}_{k-1}\\
\widetilde{P}_{k}^{-} = A_{k-1}\widetilde{P}_{k-1}A_{k-1}^\top + G_{\mathrm{exe}, k-1}G_{\mathrm{exe}, k-1}^\top + G_{k-1}G_{k-1}^\top 
\end{cases}
\quad && (\text{time update})
\\
&
\begin{cases}
\hat{\bm{x}}_{k} = \hat{\bm{x}}_{k}^{-} + L_k(\bm{y}_k - C_k\hat{\bm{x}}_k^-)\\
\widetilde{P}_{k} 	= (I - L_k C_k) \widetilde{P}_{k}^{-} (I - L_k C_k)^\top + L_kD_k D_k^\top L_k^\top,
\end{cases}
&& (\text{measurement update})
\label{eq:estProcessLin}
\end{aligned}\end{align}
where $\widetilde{P}_{k}$ denotes the estimate error covariance;
$L_k$ is the Kalman gain;
$C_k$ and $D_k$ are the linearized measurement equation about the reference, given by
\begin{align}\begin{aligned}
\widetilde{P}_{k}
\triangleq
\cov{\widetilde{\bm{x}}_k}
,\quad
L_k \triangleq \widetilde{P}_k^{-}C_k^\top(C_k \widetilde{P}_k^{-} C_k^\top + D_k D_k^\top)^{-1}
,\quad
C_k \triangleq \frac{\partial \bm{f}_{\mathrm{obs}}}{\partial \bm{x}}(\bm{x}_k^{\mathrm{ref}})
,\quad
D_k = G_{\mathrm{obs}} (\bm{x}_k^{\mathrm{ref}})
\end{aligned}\end{align}
As classically known in estimation theory, $\widetilde{P}_{k}$ and $L_k$ can be computed \textit{a priori} for linear systems.
Combining the two equations in \cref{eq:estProcessLin}, the filtered state process under stochastic error and Kalman filtering is written as:
\begin{align}\begin{aligned}
\hat{\bm{x}}_{k+1} = A_k\hat{\bm{x}}_k + B_k \bm{u}_k + \bm{c}_k + L_{k+1} \widetilde{\bm{y}}_{k+1}^{-} 
\label{eq:linEstProcess}
\end{aligned}\end{align}
where $\widetilde{\bm{y}}_k^{-}$ is the innovation process \cref{eq:innovationProcess}, whose linear form and its covariance, denoted by $P_{\widetilde{y}_k^-} $, are given as \cite{Oguri2024}
\begin{align}\begin{aligned}
\widetilde{\bm{y}}_k^{-} = C_k\widetilde{\bm{x}}_k + D_k\bm{w}_{\mathrm{obs} ,k},\quad
P_{\widetilde{y}_k^-} \triangleq \cov{\widetilde{\bm{y}}_k^{-}} = C_k \widetilde{P}_k^{-} C_k^\top + D_k D_k^\top.
\label{eq:innovationProcessLinear}
\end{aligned}\end{align}

\subsubsection{Linear Flight-path Control Model}

We consider a linear FPC policy to model \cref{eq:SOCpolicy} for convex formulation.
In particular, we use the following form of the control policy:
\begin{align}\begin{aligned}
\bm{u}_k &= \lbar{\bm{u}}_k + K_{k}\bm{z}_k,
\label{eq:SOClinPolicy}
\end{aligned}\end{align}
where $\lbar{\bm{u}}_k  $ is a nominal control input, $K_k$ is a feedback gain matrix, and $\bm{z}_k$ is a stochastic process given by
\begin{align}\begin{aligned}
\bm{z}_{k+1} = A_k\bm{z}_k + L_{k+1}\widetilde{\bm{y}}_{k+1}^-,
\quad
\bm{z}_0 = \hat{\bm{x}}_0 - \lbar{\bm{x}}_0.
\label{eq:z-process}
\end{aligned}\end{align}

\subsubsection{State and Control Statistics}
We are now ready to analytically express the statistics of the state and control---in particular, their first and second statistical moments, i.e., mean and covariance.
As derived in Ref.~\cite{Oguri2024}, it is convenient to express \cref{eq:linEstProcess} in a block-matrix form as:
\begin{align}\begin{aligned}
&
\begin{bmatrix}
\hat{\bm{x}}_0 \\
\hat{\bm{x}}_1 \\
\hat{\bm{x}}_2 \\
\vdots
\end{bmatrix}
=
\begin{bmatrix}
I_{n_x} \\
A_0 \\
A_1 A_0 \\
\vdots
\end{bmatrix}
\hat{\bm{x}}_0^{-}
+
\begin{bmatrix}
0   & 0 & \\
B_0                 & 0 & \\
A_1 B_0             & B_1               & \\
					&                   & {\ddots}
\end{bmatrix}
\begin{bmatrix}
\bm{u}_0 \\
\bm{u}_1 \\
\bm{u}_2 \\
\vdots
\end{bmatrix}
+
\begin{bmatrix}
0   & 0 & \\
I_{n_x} 		 	& 0 & \\
A_1                 & I_{n_x} 			& \\
					&                   & {\ddots}
\end{bmatrix}
\begin{bmatrix}
\bm{c}_0 \\
\bm{c}_1 \\
\bm{c}_2 \\
\vdots
\end{bmatrix}
+
\begin{bmatrix}
L_0  				& 0 & \\
A_0L_0  			& L_1 				& \\
A_1A_0L_0 			& A_1L_1			& \\
					&                   & {\ddots}
\end{bmatrix}
\begin{bmatrix}
\widetilde{\bm{y}}_0^{-} \\
\widetilde{\bm{y}}_1^{-} \\
\widetilde{\bm{y}}_2^{-} \\
\vdots
\end{bmatrix}
\label{eq:blockMatrixWrittenOut}
\end{aligned}\end{align}
which can be expressed in a compact form as:
\begin{align}\begin{aligned}
\hat{\bm{X}} = 
\mathbf{A}\hat{\bm{x}}_0^{-} + 
\mathbf{B}\bm{U} + 
\mathbf{C} + 
\mathbf{L}{\bm{Y}},
\label{eq:blockMatrixEq}
\end{aligned}\end{align}
where 
$\hat{\bm{X}} = [\hat{\bm{x}}_0^\top, \hat{\bm{x}}_1^\top, ..., \hat{\bm{x}}_N^\top]^\top $,
$\bm{U} = [\bm{u}_0^\top, \bm{u}_1^\top, ..., \bm{u}_{N-1}^\top]^\top $,
$\bm{Y} = [\widetilde{\bm{y}}_0^{-1^\top}, \widetilde{\bm{y}}_1^{-1^\top}, ..., \widetilde{\bm{y}}_{N}^{-1^\top}]^\top $,
and $\mathbf{A},\mathbf{B},\mathbf{C}, \mathbf{L}$ are defined accordingly.
Similarly, define $\mathbf{K}$ to contain $K_k$ in a block matrix form as in Ref.~\cite{Oguri2024}.
Let us also define matrices $E_{x_k}$ and $E_{u_k} $ to extract $\bm{x}_k$ and $\bm{u}_k$ from $\bm{X} $ and $\bm{U} $, respectively, as $\bm{x}_k = E_{x_k} \bm{X}$ and $\bm{u}_k = E_{u_k} \bm{U}$.
Then, the mean and covariance of the state are derived in Ref.~\cite{Oguri2024}, summarized as follows:
\begin{align}\begin{aligned}
\lbar{\bm{x}}_k &= E_{x_k} (\mathbf{A}\lbar{\bm{x}}_0 + \mathbf{B}\lbar{\bm{U}} + \mathbf{C}),
\quad && (\text{mean}) 
\\
P_k &= \hat{P}_k + \widetilde{P}_k, 
\quad && (\text{total covariance})
\\
\hat{P}_k &= E_{x_k} (I + \mathbf{BK}) \mathbf{S}(I + \mathbf{BK})^\top E_{x_k}^\top,
\quad && (\text{state dispersion covariance})
\\
{P}_{u_k} &= E_{u_k} \mathbf{K} \mathbf{S}  \mathbf{K}^\top E_{u_k}^\top
\quad && (\text{control covariance})
\label{eq:statistics}
\end{aligned}\end{align}
where recall that $\widetilde{P}_k $ is the OD covariance and recursively calculated according to the Kalman filter as in \cref{eq:estProcessLin}.
$\mathbf{S}$ is a term that is independent from the control variables, given by
\begin{align}
\mathbf{S} = 
\mathbf{A}\hat{P}_{0^-}\mathbf{A}^\top + \mathbf{L}\mathbf{P}_Y \mathbf{L}^\top.
\label{eq:def-s}
\end{align}

In our formulation, a Cholesky square-root form of covariance matrices is convenient due to its convex property.
Such square-root forms of covariance matrices are given as follows:
\begin{align}\begin{aligned}
P_k^{1/2} =
\begin{bmatrix}
\hat{P}_k^{1/2} & \widetilde{P}_k^{1/2}
\end{bmatrix}
,\quad
\hat{P}_k^{1/2} &= E_{x_k} (I + \mathbf{BK}) \mathbf{S}^{1/2},
\quad
{P}_{u_k}^{1/2} = E_{u_k} \mathbf{K} \mathbf{S}^{1/2}
,\quad
\mathbf{S}^{1/2} = 
\begin{bmatrix}
\mathbf{A}\hat{P}_{0^-}^{1/2} & \mathbf{L}\mathbf{P}_Y^{1/2}
\label{eq:sqrtForm}
\end{bmatrix}
\end{aligned}\end{align}

\begin{remark}
\label{remark:affine}
As clear from \cref{eq:statistics,eq:sqrtForm}, $\lbar{\bm{x}}_k, P_k^{1/2}, \hat{P}_k^{1/2}, $ and $P_{u_k}^{1/2}$ are affine in the FPC variables $\lbar{\bm{u}}_k $ and $K_k$ (see Proposition 1 of Ref.~\cite{Oguri2024}), which plays a key role in our convex formulation.
\end{remark}

%!TEX root = main.tex

\subsection{Cost Function}
Under ZoH control input $\bm{u}(t) = \bm{u}_k,\ \forall t\in[t_k, t_{k+1}) $, \cref{eq:SOCobjective99fuel} is equivalent to 
\begin{align}
J = 
\sum_{k=0}^{N-1}Q_{X\sim\norm{\bm{u}_k}_2}(p) \Delta t_k
,\quad 
\Delta t_k = t_{k+1} - t_k
\label{eq:SOCobjective99fuelDiscrete}
\end{align}
Applying Lemma 4 of Ref.~\cite{Oguri2024}, we can calculate the upper bound of \cref{eq:SOCobjective99fuelDiscrete} as follows:
\begin{align}\begin{aligned}
J \leq J_{\mathrm{ub}} = \sum_{k} \left[ \norm{\lbar{\bm{u}}_k}_2 + m_{\chi^2}\left(1-p,n_u\right) \norm{P_{u_k}^{1/2} }_2 \right]\Delta t_k 
\label{eq:convexSOCcost}
\end{aligned}\end{align}
where $m_{\chi^2}(\varepsilon,n_u)$ is the square root of the quantile function of the chi-squared distribution with $n_u$ degrees of freedom, evaluated at probability $(1 - \varepsilon)$, and is mathematically expressed as:
\begin{align}\begin{aligned}
m_{\chi^2}(\varepsilon,n_u) = \sqrt{Q_{X\sim \chi^2(n_u)}(1 - \varepsilon)}
\label{eq:chi2inv}
\end{aligned}\end{align}
Importantly, \cref{eq:convexSOCcost} is convex in $\lbar{\bm{U}}$ and $\mathbf{K}$ due to \cref{remark:affine}.
We minimize the upper bound $J_{\mathrm{ub}}$ instead of $J$ directly.
$m_{\chi^2}(\varepsilon,n_u)$ is straightforward to calculate in modern programming languages.
We can calculate $m_{\chi^2}(\varepsilon,n_u)$ via $\sqrt{\mathtt{chi2inv}(1 - \varepsilon, n_u)} $ in Matlab and $\sqrt{\mathtt{chi2.cdf}(1 - \varepsilon, n_u)}$ in \texttt{scipy}.

\subsection{Constraint Function}
\label{sec:constraints-convex}

Tractable formulation of \cref{eq:SOCconstraints} depends on the specific form of constraints of interest.
In this article, we consider three important constraints that are common in interplanetary low-thrust trajectory design:
\textit{thrust magnitude constraint}, \textit{terminal constraints}, and \textit{planetary gravity assist constraint}.

\subsubsection{Thrust Magnitude Chance Constraints:}
With a risk tolerance $\varepsilon_u$ and maximum thrust acceleration $u_{\max}>0$,  the thrust magnitude constraint is expressed as a chance constraint as:
\begin{align} \begin{aligned}
\P{\norm{\bm{u}_k}_2 \leq u_{\max} } \geq 1 - \varepsilon_u,\ k=0,1,...,N-1
\label{eq:stoControlConst}
\end{aligned} \end{align}
Since \cref{eq:stoControlConst} is a special form of Eq.~53 in Ref.~\cite{Oguri2024}, we can directly apply Lemma 3 of Ref.~\cite{Oguri2024} to \cref{eq:stoControlConst}, yielding
\begin{align}
\norm{\lbar{\bm{u}}_k}_2 + m_{\chi^2}\left(\varepsilon_u,n_u\right)\norm{P_{u_k}^{1/2} }_2
\leq u_{\max},\ k=0,1,...,N-1
\label{eq:stoControlConstDetFormConvex}
\end{align}
which is again convex in $\lbar{\bm{U}}$ and $\mathbf{K}$ due to \cref{remark:affine}.
Note that convex formulation of thrust magnitude chance constraints with mass uncertainty requires more analytical effort; see Ref.~\cite{Oguri2022f} for instance.

\begin{remark}
Control chance constraints given by \cref{eq:stoControlConstDetFormConvex} provide a better approximation than the one proposed in Ridderhof et.al.\cite{Ridderhof2020c} (and used in Benedikter et.al.\cite{Benedikter2023}).
The key difference lies in the definition of $m_{\chi^2}\left(\varepsilon_u,n_u\right) $ that appear in \cref{eq:stoControlConstDetFormConvex,eq:convexSOCcost}, which is defined as in \cref{eq:chi2inv} in this study.
On the other hand, the equivalent of $m_{\chi^2}$ proposed in previous studies \cite{Ridderhof2020c,Benedikter2023} takes the following form (denoting it as $m_{\chi^2\mathrm{(old)}}$):
\begin{align}\begin{aligned}
m_{\chi^2\mathrm{(old)}} = 
\begin{cases}
\sqrt{2\ln{\frac{1}{\varepsilon}}} + \sqrt{n_u} 	&\quad (n_u > 2), \\
\sqrt{2\ln{\frac{1}{\varepsilon}}} 				&\quad (n_u = 1, 2)
\end{cases}
\label{eq:sgm_Jack}
\end{aligned}\end{align}
The values of $m_{\chi^2}$ given by \cref{eq:chi2inv} (proposed one) and by \cref{eq:sgm_Jack} (previous one \cite{Ridderhof2020c,Benedikter2023}) coincide when $n_u=2$, whereas those by \cref{eq:chi2inv} are smaller for $n_u>2$, leading to tighter (hence better) constraints.
This tighter bound is particularly beneficial in space applications since we typically deal with three-dimensional control input (e.g., thrust, acceleration, etc).
\cref{t:comparison} numerically compares the two approaches, confirming the tightness of our approximation.
The benefit of this tighter bound is not limited to control magnitude constraints but broadly applicable to any magnitude-related chance constraints, including state deviation constraints (see Ref.~\cite{Oguri2024}).
\end{remark}

\begin{table}[tb]
\centering
\caption{Magnitude constraint tightness comparison}
\small
\label{t:comparison}
\begin{tabular}{lrccccc}
\hline\hline
Approach & $\{\varepsilon,n_u\}=$				& $\{10^{-2},3\} $ & $\{10^{-3},3\} $ & $\{10^{-2},4\} $ & $\{10^{-3},4\} $ \\ \hline
Proposed [\cref{eq:chi2inv}] & $m_{\chi^2}=$  										& $3.3682  $ 			& $4.0331  $ & $3.6437 $ & $4.2973 $  \\
Previous \cite{Ridderhof2020c} [\cref{eq:sgm_Jack}] & $m_{\chi^2\mathrm{(old)}}=$  			& $4.7669  $ 			& $5.4490  $ & $5.0349 $ & $5.7169 $  \\
\hline\hline
\end{tabular}
\end{table}

\subsubsection{Terminal Constraints}

A convex formulation of \cref{eq:SOCterminal} is readily available in existing studies (e.g., Refs.~\cite{Ridderhof2020,Oguri2024}), yielding:
\begin{subequations}
\begin{align}
&E_{x_N}(\mathbf{A}\lbar{\bm{x}}_0 + \mathbf{B}\lbar{\bm{U}} + \bm{C}) - \bm{x}_f = 0
\label{eq:convexTerminalMean}
,\\
&\norm{(P_f - \widetilde{P}_N)^{-1/2}E_{x_N} (I + \mathbf{BK})\mathbf{S}^{1/2} }_2 - 1 \leq 0.
\label{eq:convexTerminalCovariance}
\end{align}
\label{eq:convexTerminal}
\end{subequations}
which are convex in $\lbar{\bm{U}}$ and $\mathbf{K}$.

\subsection{Incorporating Planetary Gravity Assists}
\label{sec:GA}

As commonly done in interplanetary mission design, we use the patched-conic model \cite{Sims1996} to model the effect of GA on the state trajectory under uncertainty and formulate (chance) constraints accordingly.
In this model, a GA rotates the spacecraft velocity relative to the planet while the periapsis radius about the planet, $r_{\mathrm{peri}} $ is constrained to be above a planetary impact radius, $r_{p\min}$.
Note that a higher-fidelity GA modeling is also possible to incorporate in the proposed framework.
In that case, extending the formulation proposed in Ref.~\cite{Ellison2019} to stochastic setting would be a straightforward approach, where the state statistics would be propagated in the planet-centric frame while imposing the impact constraint.

Let a GA occur at $t_k$, and assign $k$ and $k+1$ to the nodes that are immediately before and after the GA, respectively.
Denoting the spacecraft velocity relative to the planet by $\bm{v}^{\infty}_{k}= \bm{v}_k - \bm{v}_p(t_k) $, where $\bm{v}_p(t_k) $ is the planet's velocity at $t_k$, under the patched-conic model, the GA effect is instantaneous and rotates $\bm{v}^{\infty}_{k}$ by a turn angle $\theta$ \cite{Sims1996}, i.e.,
\begin{align}\begin{aligned}
t_k = t_{k+1},
\quad
\bm{r}_k = \bm{r}_{k+1},
\quad
\norm{\bm{v}^{\infty}_{k+1}}_2 = \norm{\bm{v}^{\infty}_k}_2,
% \norm{\bm{v}^{\infty}_{k+1}}_2 = \norm{\bm{v}^{\infty}_k}_2 = v_{\infty},
\quad
\bm{v}^{\infty}_{k+1} \cdot \bm{v}^{\infty}_k = \norm{\bm{v}^{\infty}_k}_2^2 \cos{\theta}
\label{eq:GAcondition}
\end{aligned}\end{align}
The periapsis radius with respect to the planet during the flyby, $r_{\mathrm{peri}}(\cdot)$, is a function of $\theta$ and $\bm{v}^{\infty}_{k}$ (hence of $\bm{v}_k$ and $t_k$), and must be greater than a minimum altitude, $r_{p\min}$, as follows \cite{Sims1996}:
\begin{align}\begin{aligned}
r_{\mathrm{peri}}(\bm{v}_k, \theta, t_k) \geq r_{p\min},\quad
r_{\mathrm{peri}}(\bm{v}_k, \theta, t_k) = \frac{\mu_p}{\norm{\bm{v}^{\infty}_k}_2^2}\left(\frac{1}{\sin{\frac{\theta}{2}}} - 1\right),
\label{eq:impactConstraint0}
\end{aligned}\end{align}
where $\mu_p$ is the planet's gravity constant.
To ease the formulation in \cref{sec:impact-CC}, we equivalently express \cref{eq:impactConstraint0} as:
\begin{align}\begin{aligned}
\norm{\bm{v}^{\infty}_k}_2 - \sqrt{\frac{\mu_p}{r_{p\min}}\left(\frac{1}{\sin{\frac{\theta}{2}}} - 1 \right) } \leq 0
\label{eq:impactConstraintDet}
\end{aligned}\end{align}

\subsubsection{State Statistics Mapping across a Gravity Assist}
While \cref{eq:GAcondition} is simple to apply for deterministic problems, it is not suited for stochastic problems because it does not clarify how the state statistics should be mapped across a GA.
Thus, this study derives another form of \cref{eq:GAcondition} that furnishes an explicit formula that maps $P_k $ to $P_{k+1}$ across a GA event.
To this end, we express the rotation of $\bm{v}_{\infty}$ via an orthogonal matrix $R_{\mathrm{GA}}\in\R^{3\times3}$ as $\bm{v}^{\infty}_{k+1} = R_{\mathrm{GA}} \bm{v}^{\infty}_k $, where $R_{\mathrm{GA}}$ is assumed to be deterministic (hence $\theta$ too) and may be considered as an optimization variable.
However, using $R_{\mathrm{GA}}$ as an optimization variable introduces another issue;
imposing the orthogonality condition is non-convex.
To address this issue, this study utilizes \textit{Cayley transform}\footnote{For any skew-symmetric matrix $A$, a matrix $Q$ that satisfies $Q=(I +A)^{-1} (I - A) $ is an orthogonal matrix with its determinant $+1$, i.e., $Q Q^\top = I$ and $|Q| = 1 $.} to express the orthogonal matrix $R_{\mathrm{GA}}$ in terms of a skew-symmetric matrix $V\in\R^{3\times3}$ (hence $V^\top = - V$) as $R_{\mathrm{GA}} = (I_3 + V)^{-1} (I_3 - V)$, yielding
\begin{align}\begin{aligned}
\bm{v}^{\infty}_{k+1} = (I_3 + V)^{-1} (I_3 - V) \bm{v}^{\infty}_k
\quad \Leftrightarrow \quad
\bm{v}_{k+1} - \bm{v}_p(t_k) = (I_3 + V)^{-1} (I_3 - V) [\bm{v}_{k} - \bm{v}_p(t_k)]
\label{eq:CayleyVinf}
\end{aligned}\end{align}
where $I_n\in\R^{n\times n} $ is an identity matrix.
It is clear that, because $R_{\mathrm{GA}} = (I_3 + V)^{-1} (I_3 - V)$ is a rotation matrix, \cref{eq:CayleyVinf} automatically satisfies the third constraint of \cref{eq:GAcondition}.

We can parameterize the skew-symmetric matrix $V$ by only three scalars and express it as $V = [\bm{u}_k]_{\times}$ with $\bm{u}_k\in\R^3$, where $[\cdot]_{\times}$ denotes the matrix cross product operator.
Here, we can see $\bm{u}_k$ as a control input at $t_k$ that characterizes the effect of the GA at this epoch.
This viewpoint, combined with the Cayley transformation, allows us to describe the mapping of state under the effect of a GA as a nonlinear function, which then provides a linear mapping equation.
This idea is formally stated in \cref{lemma:GAequation}.

\begin{lemma}
\label{lemma:GAequation}
Using Cayley transformation,
the mapping of orbital states $\bm{x}_{k}$ to $\bm{x}_{k+1}$ due to a gravity assist is expressed as a nonlinear function of $\bm{x}_k, \bm{u}_k, t_k$, as follows:
\begin{align}
\bm{x}_{k+1} = \bm{f}_{\mathrm{GA}}(\bm{x}_k, \bm{u}_k, t_k)
=
\begin{bmatrix}
\bm{r}_k \\
\bm{v}_p(t_k) + R_{\mathrm{GA}}(\bm{u}_k) \cdot [\bm{v}_k - \bm{v}_p(t_k)]
\end{bmatrix}
,
\quad
R_{\mathrm{GA}}(\bm{u}_k) = (I_3 + [\bm{u}_k]_{\times})^{-1} (I_3 - [\bm{u}_k]_{\times})
\label{eq:NLeqGA}
\end{align}
where $[\cdot]_{\times}$ denotes the matrix cross product operator, and $(I_3 + [\bm{u}_k]_{\times})^{-1}$ always exists.
Also, its linearized equation can be expressed as $\bm{x}_{k+1} = A_k\bm{x}_k + B_k \bm{u}_k + \bm{c}_k $, where the linear system matrices $A_k, B_k, \bm{c}_k$ are given by
\begin{align}
\begin{aligned}
A_k &= 
\left.\frac{\partial \bm{f}_{\mathrm{GA}}}{\partial \bm{x}}\right|^{\mathrm{ref}}
= 
\begin{bmatrix}
I_3 & 0_{3\times3} \\
0_{3\times3} & R_{\mathrm{GA}}(\bm{u}_k^{\mathrm{ref}})
\end{bmatrix},
\quad
B_k = 
\left.\frac{\partial \bm{f}_{\mathrm{GA}}}{\partial \bm{u}}\right|^{\mathrm{ref}}
= 
\begin{bmatrix}
0_{3\times3} \\
(I_3 + [\bm{u}_k^{\mathrm{ref}}]_{\times})^{-1} ([{\bm{v}}_{k+1}^{\mathrm{ref}}]_{\times} + [{\bm{v}}_{k}^{\mathrm{ref}}]_{\times} - 2[{\bm{v}}_p]_{\times})
\end{bmatrix}
,\\
\bm{c}_k &= 
\bm{f}_{\mathrm{GA}}(\bm{x}_k^{\mathrm{ref}}, \bm{u}_k^{\mathrm{ref}}) - A_k \bm{x}_k^{\mathrm{ref}} - B_k \bm{u}_k^{\mathrm{ref}}
\end{aligned}
\label{eq:LinEqGA}
\end{align}
\end{lemma}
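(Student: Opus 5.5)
The plan has three parts: establish the nonlinear map and invertibility, get $A_k$ by direct differentiation, and get $B_k$ by differentiating the Cayley transform.

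\textbf{Nonlinear map and invertibility.} Under the patched-conic model in \cref{eq:GAcondition}, the position is unchanged across the gravity assist, $\bm{r}_{k+1}=\bm{r}_k$. Combining this with \cref{eq:CayleyVinf} for $V=[\bm{u}_k]_\times$ and solving for $\bm{v}_{k+1}$ gives \cref{eq:NLeqGA} directly.

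For invertibility, note that $[\bm{u}_k]_\times$ is skew-symmetric, so its eigenvalues are $0$ and $\pm i\norm{\bm{u}_k}_2$. Hence the eigenvalues of $I_3+[\bm{u}_k]_\times$ are $1$ and $1\pm i\norm{\bm{u}_k}_2$, none of which vanish. A quadratic-form argument works equally well: $\bm{a}^\top(I+[\bm{u}]_\times)\bm{a}=\norm{\bm{a}}^2$ for real $\bm{a}$, so the kernel is trivial.

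\textbf{Matrix $A_k$.} Since $\bm{f}_{\mathrm{GA}}$ is the identity in $\bm{r}$ and affine in $\bm{v}$ with matrix $R_{\mathrm{GA}}(\bm{u}_k)$, the Jacobian with respect to $\bm{x}$ is immediate. The vector $\bm{c}_k$ is just the definition of an affine first-order Taylor expansion about the reference.

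\textbf{Matrix $B_k$ (main step).} The nontrivial part is the Jacobian of $R_{\mathrm{GA}}(\bm{u})\bm{w}$ with respect to $\bm{u}$, where $\bm{w}=\bm{v}_k-\bm{v}_p$. I would avoid differentiating the matrix inverse directly and work from the implicit relation instead. Write $\bm{w}'=\bm{v}_{k+1}-\bm{v}_p$, so that
\[
(I+[\bm{u}]_\times)\bm{w}'=(I-[\bm{u}]_\times)\bm{w},
\]
that is, $\bm{w}'+\bm{u}\times\bm{w}'=\bm{w}-\bm{u}\times\bm{w}$.

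Take a variation $\delta\bm{u}$ with $\bm{w}$ fixed:
\[
\delta\bm{w}'+\delta\bm{u}\times\bm{w}'+\bm{u}\times\delta\bm{w}'=-\delta\bm{u}\times\bm{w}.
\]
Rearranging, $(I+[\bm{u}]_\times)\delta\bm{w}'=-\delta\bm{u}\times(\bm{w}'+\bm{w})$. Using $-\bm{a}\times\bm{b}=\bm{b}\times\bm{a}=[\bm{b}]_\times\bm{a}$, this becomes
\[
(I+[\bm{u}]_\times)\delta\bm{w}'=[\bm{w}'+\bm{w}]_\times\,\delta\bm{u}.
\]
Inverting and substituting $\bm{w}'+\bm{w}=\bm{v}_{k+1}+\bm{v}_k-2\bm{v}_p$, together with linearity of $[\cdot]_\times$, gives the stated $B_k$ evaluated at the reference. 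The velocity rows are the only nonzero ones because $\bm{r}_{k+1}$ does not depend on $\bm{u}$.

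The main care point is sign bookkeeping in the cross-product identities. It is also worth noting that $\bm{v}_{k+1}^{\mathrm{ref}}$ appearing in $B_k$ means $\bm{f}_{\mathrm{GA}}$ evaluated at the reference, and that $\bm{v}_p$ is evaluated at $t_k$ with $t_k$ held fixed.
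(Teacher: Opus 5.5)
Your proposal is correct and follows essentially the paper's route: the paper also starts from the implicit relation $(I_3+[\bm{u}_k]_\times)(\bm{v}_{k+1}-\bm{v}_p)=(I_3-[\bm{u}_k]_\times)(\bm{v}_k-\bm{v}_p)$. It rewrites this with $[\bm{u}]_\times\bm{a}=-[\bm{a}]_\times\bm{u}$ as $(\bm{v}_{k+1}-\bm{v}_k)-([\bm{v}_{k+1}]_\times+[\bm{v}_k]_\times-2[\bm{v}_p]_\times)\bm{u}_k$, then linearizes and solves for $\bm{v}_{k+1}$, which amounts to your implicit-differentiation step. Your eigenvalue and quadratic-form argument for the invertibility of $I_3+[\bm{u}_k]_\times$, and your remark that $\bm{v}_{k+1}^{\mathrm{ref}}$ in $B_k$ must mean $\bm{f}_{\mathrm{GA}}$ evaluated at the reference, are useful additions, since the paper asserts both without justification.
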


\begin{proof}
See \cref{proof:GAequation}.
\end{proof}

Using \cref{lemma:GAequation}, we can integrate the effect of GAs into the form of \cref{eq:linEstProcess} with $L_{k+1} = 0$, assuming that we do not have the opportunity to perform measurement updates between $t_k$ and $t_{k+1}$ since $t_k = t_{k+1} $ at a GA.
This way, we can capture the statistical effect of GAs in a unified form, and hence the state statistics mapping due to GA are characterized by \cref{eq:statistics,eq:sqrtForm} with $A_k, B_k, \bm{c}_k$ given as in \cref{eq:LinEqGA}.
Note here that, since the turn angle $\theta$ (and hence $V$) is assumed deterministic, $\bm{u}_k$ at the GA epoch is also deterministic, and therefore $\lbar{\bm{u}}_k = \bm{u}_k$ and $K_k = 0$.

The development above takes care of the first three equations of \cref{eq:GAcondition} in the stochastic setting.
\cref{sec:impact-CC,sec:turn-const} discuss the reformulation of the last equation of \cref{eq:GAcondition} and \cref{eq:impactConstraintDet}.

\subsubsection{Turn angle constraint}
\label{sec:turn-const}

The turn angle constraint given by the last equation of \cref{eq:GAcondition} is a deterministic constraint in the current formulation due to the assumption that the turn angle $\theta$ is deterministic.
Hence, the turn angle constraint needs to be satisfied at the mean values of $\bm{v}^{\infty}_{k+1}$ and $\bm{v}^{\infty}_k$, i.e., 
\begin{align}\begin{aligned}
g_{\mathrm{GA-turn}}(\lbar{\bm{x}}_{k+1}, \lbar{\bm{x}}_k, \theta) = 
\norm{\lbar{\bm{v}}^{\infty}_k}_2^2 \cos{\theta} - \lbar{\bm{v}}^{\infty}_{k+1} \cdot \lbar{\bm{v}}^{\infty}_k = 0,
\quad
\lbar{\bm{v}}^{\infty}_k = E_v\bm{x}_k - \bm{v}_p(t_k)
\label{eq:turn-angle}
\end{aligned}\end{align}
where $E_v = [0_{3\times 3}\ \ I_3]$ extracts the velocity from $\bm{x}_k$.
\cref{eq:turn-angle} is clearly non-convex in $\lbar{\bm{x}}_{k+1}, \lbar{\bm{x}}_k, \theta$ and thus linearized about $\lbar{\bm{x}}_k^{\mathrm{ref}}, \theta^{\mathrm{ref}}$ to form convex subproblem at each iteration, where $(\cdot)^{\mathrm{ref}}$ indicates evaluation at the solution of the previously accepted iteration.
Indicating the linearized equation with $\widetilde{(\cdot)}$, linearizing \cref{eq:turn-angle} about $\lbar{\bm{x}}_{k+1}^{\mathrm{ref}}, \lbar{\bm{x}}_k^{\mathrm{ref}}, \theta^{\mathrm{ref}}$ leads to
{\footnotesize
\begin{align}\begin{aligned}
\widetilde{g}_{\mathrm{GA-turn}}(\lbar{\bm{x}}_{k+1}, \lbar{\bm{x}}_k, \theta) &= 
g_{\mathrm{GA-turn}}^{\mathrm{ref}} 
+ \left.\frac{\partial g_{\mathrm{GA-turn}}}{\partial \lbar{\bm{x}}_{k+1}}\right|^{\mathrm{ref}}(\lbar{\bm{x}}_{k+1} - \lbar{\bm{x}}_{k+1}^{\mathrm{ref}})
+ \left.\frac{\partial g_{\mathrm{GA-turn}}}{\partial \lbar{\bm{x}}_k}\right|^{\mathrm{ref}}(\lbar{\bm{x}}_k - \lbar{\bm{x}}_k^{\mathrm{ref}})
+ \left.\frac{\partial g_{\mathrm{GA-turn}}}{\partial \theta}\right|^{\mathrm{ref}}(\theta - \theta^{\mathrm{ref}})
,
\\
\frac{\partial g_{\mathrm{GA-turn}}}{\partial \lbar{\bm{x}}_k} &=
2 \lbar{\bm{v}}^{\infty^\top}_k E_v \cos{\theta} - \lbar{\bm{v}}^{\infty^\top}_{k+1} E_v
,\quad
\frac{\partial g_{\mathrm{GA-turn}}}{\partial \lbar{\bm{x}}_{k+1}} =
- \lbar{\bm{v}}^{\infty^\top}_{k} E_v
,\quad
\frac{\partial g_{\mathrm{GA-turn}}}{\partial \theta} =
- \norm{\lbar{\bm{v}}^{\infty}_k}_2^2 \sin{\theta}
\label{eq:turnAngleConst-linear}
\end{aligned}\end{align}}

\subsubsection{Impact chance constraint}
\label{sec:impact-CC}

Due to the stochasticity in $\bm{v}_k$ and hence in $\bm{v}^{\infty}_k$, the impact constraint, \cref{eq:impactConstraintDet}, is ill-defined in its current form, and needs to be expressed as a chance constraint with a risk level $\varepsilon_{\mathrm{GA}}$:
\begin{align}\begin{aligned}
\P{\norm{\bm{v}^{\infty}_k}_2 - \sqrt{\frac{\mu_p}{r_{p\min}}\left(\frac{1}{\sin{\frac{\theta}{2}}} - 1 \right) } \leq 0} \geq 1 - \varepsilon_{\mathrm{GA}}
\label{eq:stoCCperiapsis}
\end{aligned}\end{align}
Noting that $\bm{v}^{\infty}_k = \bm{v}_k - \bm{v}_p(t_k) = E_v\bm{x}_k - \bm{v}_p(t_k) \sim \mathcal{N}(E_v\lbar{\bm{x}}_k - \bm{v}_p,\ E_v P_k E_v^\top) $, we can take the same approach as for \cref{eq:stoControlConst} and directly apply Lemma 3 of Ref.~\cite{Oguri2024} to \cref{eq:stoCCperiapsis}, leading to
\begin{align}
g_{\mathrm{GA-impact}}(\lbar{\bm{x}}_k, P_k^{1/2}, \theta) = \norm{E_v\lbar{\bm{x}}_k - \bm{v}_p}_2 + m_{\chi^2}\left(\varepsilon_{\mathrm{GA}},3\right)\norm{E_v P_k^{1/2}}_2 - \sqrt{\frac{\mu_p}{r_{p\min}}\left(\frac{1}{\sin{\frac{\theta}{2}}} - 1 \right) } \leq 0
\label{eq:detCCperiapsis}
\end{align}
which is convex in $\lbar{\bm{u}}_k, K_k, \forall k $ due to \cref{remark:affine}, but not convex in $\theta$.
Thus, $g_{\mathrm{GA-impact}}(\cdot)$ is linearized about $\theta^{\mathrm{ref}}$, yielding
\begin{align}
\widetilde{g}_{\mathrm{GA-impact}}(\lbar{\bm{x}}_k, P_k^{1/2}, \theta) =
g_{\mathrm{GA-impact}}(\lbar{\bm{x}}_k, P_k^{1/2}, \theta^{\mathrm{ref}}) - \sqrt{\frac{\mu_p}{r_{p\min}}}\ \frac{\partial }{\partial \theta}\left(\sqrt{\frac{1}{\sin{\frac{\theta}{2}}} - 1} \right)_{\theta=\theta^{\mathrm{ref}}} (\theta - \theta^{\mathrm{ref}})
\label{eq:detCCperiapsis-linear}
\end{align}

%!TEX root = main.tex
\section{Solution Method via Sequential Convex Programming}
\label{sec:solutionMethod}

\subsection{Penalized Convex Subproblem with Trust Region}
\label{sec:penalty-formulation}

Like typical SCP approaches for nonlinear trajectory optimization, we introduce \textit{virtual buffers} and \textit{trust region} and penalize the constraint buffers to facilitate the constraint satisfaction;
see \cite{Mao2016,Oguri2023b,Oguri2024c} for more detail.

Virtual buffers are typically introduced to relax constraints that are originally non-convex and hence need to be approximated for forming a convex subproblem.
On the other hand, relaxing all the approximated constraints can lead to too loose problem formulations.
In general, the choice of which constraints to relax highly depends on the problem and reflects the mission designers' experience and insights.
Denote the equality and inequality constraints to relax as $\bm{g}^{\mathrm{eq, relax}} (=0)$ and $\bm{g}^{\mathrm{ineq, relax}} (\leq 0)$, respectively, and their approximated convex forms as $\widetilde{\bm{g}}^{\mathrm{eq, relax}}$ and $\widetilde{\bm{g}}^{\mathrm{ineq, relax}}$.
Then, the approximated constraints with relaxation via virtual buffers are expressed as:
\begin{align}
\widetilde{\bm{g}}^{\mathrm{eq, relax}}(\lbar{\bm{X}},\bar{\bm{U}},\mathbf{K},\bm{\Theta}) = \bm{\xi}, \quad
\widetilde{\bm{g}}^{\mathrm{ineq, relax}}(\lbar{\bm{X}},\bar{\bm{U}},\mathbf{K},\bm{\Theta}) \leq \bm{\zeta}, 
\label{eq:relaxedConstraints}
\end{align}
where $\bm{\xi} $ and $\bm{\zeta} \geq 0 $ are virtual buffers for equality and inequality constraints, respectively.
Recall that $\bm{\Theta}$ denotes a vector of relevant parameters to optimize (e.g., turn angle $\theta$ at each GA).

Specifically, in this work, we relax the final mean state constraint (equality) and the gravity-assist impact chance constraint (inequality), which are both non-convex in variables in the original form.
Hence, $\widetilde{\bm{g}}^{\mathrm{eq, relax}}$ consists of \cref{eq:convexTerminalMean} and $\widetilde{\bm{g}}^{\mathrm{ineq, relax}}$ of \cref{eq:detCCperiapsis-linear}.

We formulate the penalty function based on an SCP algorithm called \texttt{SCvx*} (\textit{SCvx-star}) \cite{Oguri2023b}, which leverages the augmented Lagrangian framework to extend its predecessor \texttt{SCvx} \cite{Mao2016,Mao2018} to provide a theoretical convergence guarantee to a feasible optimal solution.
\texttt{SCvx*} introduces Lagrange multipliers $\bm{\lambda} $ and $\bm{\mu} \geq 0$, which have the same dimensions as the virtual buffers $\bm{\xi}$ and $\bm{\zeta}$, respectively, and defines the penalty function as:
\begin{align}
P (\bm{\xi}, \bm{\zeta}) = \sum_{i}
\lambda_i \xi_i + \frac{1}{w}\phi(w\xi_i)
+
\mu_i \zeta_i + \frac{1}{w}\phi(w[\zeta_i]_+)
\label{eq:ALpenalty}
\end{align}
where $w>0$ is a penalty weight, $\phi(\cdot): \R\mapsto \R_+$ is strictly convex and smooth (once continuously differentiable), and $[\zeta_i]_+ = \max(\zeta_i, 0) $.
The multipliers $\lambda_i$ and $\mu_i$ are updated according to the augmented Lagrangian framework (in its generalized form; see Chapter 5, \cite{Bertsekas1982}):
\begin{align}
\lambda_i \gets \lambda_i + \nabla \phi(w g^{\mathrm{eq, relax}}_i),
\quad
\mu_i \gets \left[\mu_i + \nabla \phi(w g^{\mathrm{ineq, relax}}_i) \right]_+ ,
\label{eq:LagrangeUpdate}
\end{align}
where $g^{\mathrm{eq, relax}}_i $ and $g^{\mathrm{ineq, relax}}_i $ represent the $i$-th elements of the original functions for the relax equality and inequality constraints, respectively.
If we use $\phi(z) = z^2/2$, then $\nabla \phi = z$ (where $z\in\R$ is a dummy variable), which reduces \cref{eq:LagrangeUpdate} to the conventional multiplier updates: $\lambda_i \gets \lambda_i + w g^{\mathrm{eq, relax}}_i$ and $\mu_i \gets [\mu_i + wg^{\mathrm{ineq, relax}}_i ]_+$ (see, for instance, \cite{Oguri2023b,Lantoine2012}).

On the other hand, this study utilizes the following form of $\phi(\cdot)$ to define the penalty function: 
\begin{align}
\phi(z) = \frac{1}{\tau} \abs{z}^\tau + \frac{1}{2}z^2,
\quad
\nabla \phi(z) = \mathrm{sign}(z) \abs{z}^{\tau-1} + z,
\quad \tau \in(1,2)
\label{eq:penalty-l1approx}
\end{align}
which is convex in $z$ and closely approximates $l_1$-norm penalty ($\phi(z) = \abs{z}$) near the origin ($z\approx 0$) when $\tau$ is close to unity. 
\cref{eq:penalty-l1approx} is chosen to take advantage of the desirable property of $l_1$-norm penalty (known as a class of \textit{exact} penalty functions in constrained optimization literature \cite{Han1979,Mayne1987}) while retaining the smoothness;
the smoothness (continuous differentiability) is required in the augmented Lagrangian framework (Chapter 5, \cite{Bertsekas1982}), while the vanilla $l_1$-norm penalty $\phi(z) = \abs{z} $ is not smooth at the origin.

\cref{f:penalty-fcn} shows the values of the $l_1$-like penalty function and its gradient with different $\tau$ given by \cref{eq:penalty-l1approx} and compares them against the $l_1$ exact penalty $\phi(z) = \abs{z}$ and quadratic penalty $\phi(z) = z^2$.
This comparison visually verifies that \cref{eq:penalty-l1approx} approaches the $l_1$ penalty near the origin as $\tau \to 1$ while retaining the smoothness, i.e., continuous $\nabla \phi(z)$, for any $z$ including the origin, as clear from \cref{f:grad}.
Based on these observations, the numerical examples in this paper employ $\tau = 1.1$.

\begin{figure}[tb]
\centering \subfigure[\label{f:phi} Function values]
{\includegraphics[width=0.47\linewidth]{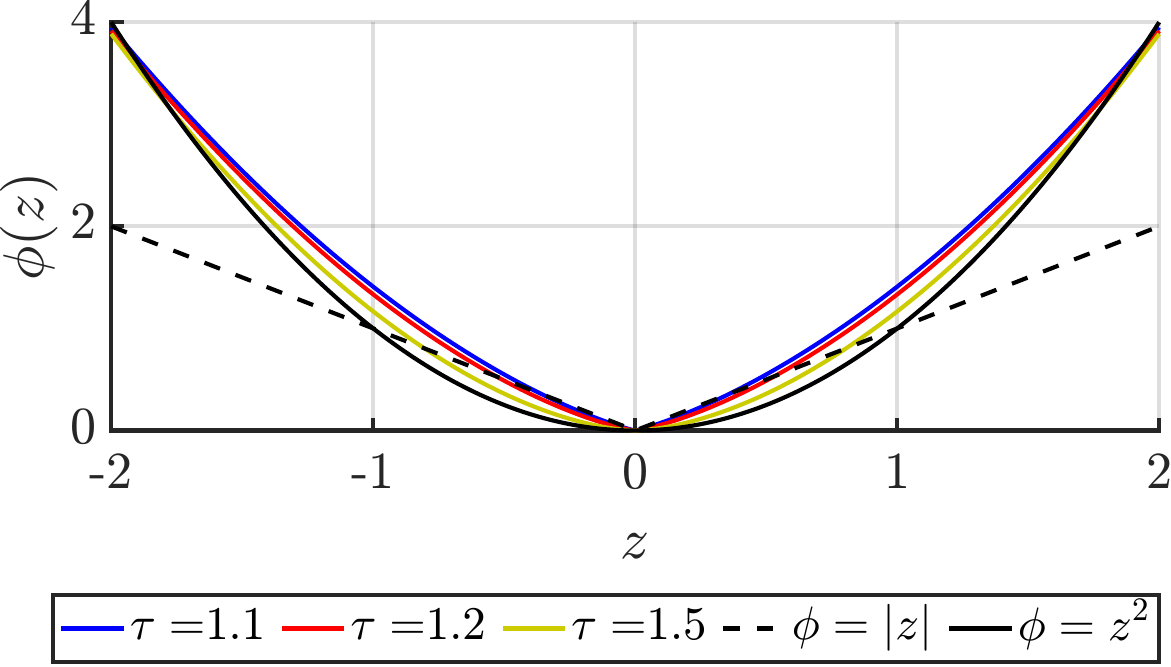}}
\centering \subfigure[\label{f:grad} Gradient values]
{\includegraphics[width=0.47\linewidth]{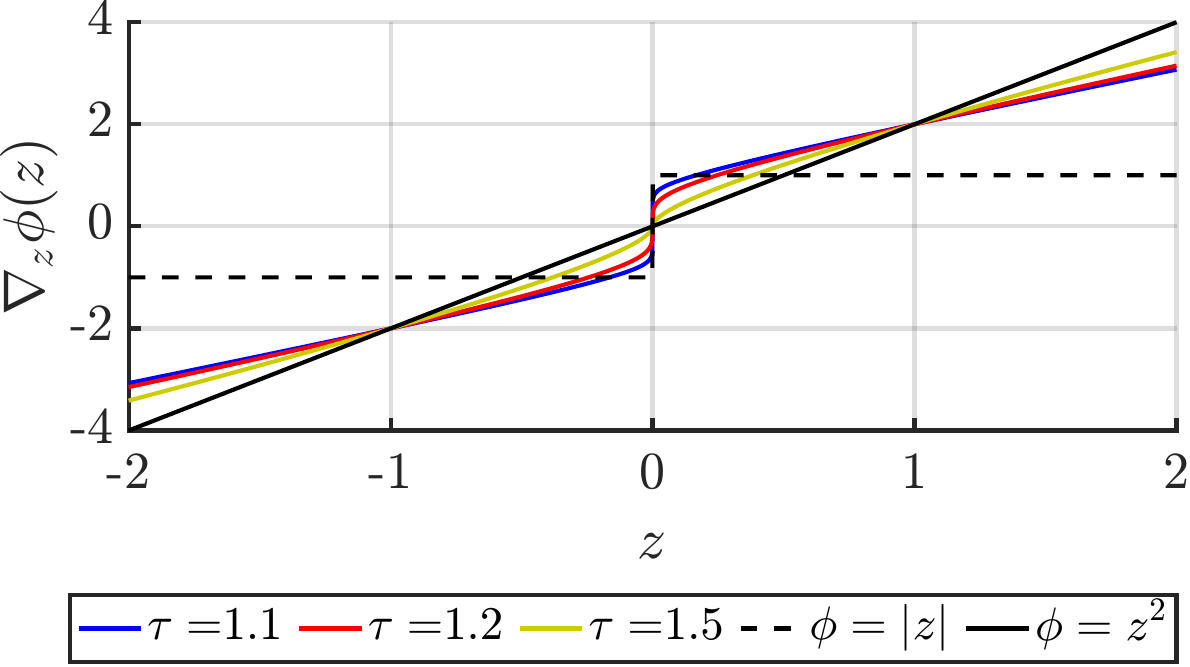}}
	\caption{\label{f:penalty-fcn} $l_1$-like smooth penalty function $\phi(z)$ given in \cref{eq:penalty-l1approx} and its gradient for different $\tau$ for $z\in[-2,2]$, compared with $l_1$ exact penalty function $\abs{z}$ and quadratic penalty $z^2$}
\end{figure}

We then argument the cost function ($\Delta$V99 upper bound) \cref{eq:convexSOCcost} by the augmented-Lagrangian penalty \cref{eq:ALpenalty} as:
\begin{align}
J_{\mathrm{aug}}(\lbar{\bm{X}},\lbar{\bm{U}}, \mathbf{K},\bm{\Theta}, \bm{\xi}, \bm{\zeta}) = J_{\mathrm{ub}}(\lbar{\bm{U}}, \mathbf{K}) + P (\bm{\xi}, \bm{\zeta})
\label{eq:augmentedCost}
\end{align}
We also impose trust region constraints on the decision variables to avoid artificial unboundedness:
\begin{align}
\norm{D_x(\lbar{\bm{X}} - \lbar{\bm{X}}^{\mathrm{ref}})}_{\infty} \leq \Delta_{\mathrm{TR}},\quad
\norm{D_u(\lbar{\bm{U}} - \lbar{\bm{U}}^{\mathrm{ref}})}_{\infty} \leq \Delta_{\mathrm{TR}},\quad
\norm{D_\theta(\bm{\Theta} - \bm{\Theta}^{\mathrm{ref}})}_{\infty} \leq \Delta_{\mathrm{TR}},
\label{eq:trustRegion}
\end{align}
where $\Delta_{\mathrm{TR}} > 0$ is the trust region radius, $D_x $, $D_u $, and $D_{\theta}$ are scaling (typically diagonal) matrices for the state, control, and parameter updates, respectively;
intuitively, greater $D_x, D_u$ leads to a stricter trust region for a given $\Delta_{\mathrm{TR}}$.
We do not impose  a trust region on $\mathbf{K} $, as $\mathbf{K} $ only linearly affects the trajectory of (Cholesky factor of) covariance matrices in our formulation.

\begin{remark}
\label{remark:convexity-cost-and-TR}
Both \cref{eq:augmentedCost,eq:trustRegion} are convex in variables.
$J_{\mathrm{aug}}$ in \cref{eq:augmentedCost} is convex in $\lbar{\bm{U}}, \mathbf{K}$ as clear from \cref{eq:convexSOCcost} and \cref{remark:affine}.
$P (\bm{\xi}, \bm{\zeta})$ in \cref{eq:augmentedCost}, defined in \cref{eq:ALpenalty} with \cref{eq:penalty-l1approx}, is convex in $\bm{\xi}$ and $\bm{\zeta}$ because \cref{eq:penalty-l1approx} is convex (which proves $\phi(w\xi_i)$ being convex in $\bm{\xi}$), and the fact that $[\zeta_i]_+$ is non-negative and $\tau > 1$ implies the convexity of $\phi(w[\zeta_i]_+)$ in $\bm{\zeta}$ (recall that, if $g$ is convex and nonnegative, then $g(x)^p$ is convex for $p\geq 0$ \cite{Boyd2004}).
\cref{eq:trustRegion} is clearly convex in $\lbar{\bm{X}}$.
\end{remark}

Combing the discussion thus far, the convex subproblem is formulated as in \cref{prob:convexSOC}.
\begin{problem}[Convex subproblem]
\label{prob:convexSOC}
Find $\lbar{\bm{X}}^* $, $\lbar{\bm{U}}^* $, $\mathbf{K}^* $, $\bm{\Theta}^*$, $\bm{\xi}^*$, and $\bm{\zeta}^*$ that
minimize the augmented cost \cref{eq:augmentedCost} while satisfying 
the filtered state dynamics constraints \cref{eq:sqrtForm} including the GA effect \cref{eq:LinEqGA}, 
control magnitude constraints \cref{eq:stoControlConstDetFormConvex}, 
GA turn-angle constraints \cref{eq:turnAngleConst-linear},
terminal covariance constraints \cref{eq:convexTerminalCovariance}, 
relaxed constraints \cref{eq:relaxedConstraints} with \cref{eq:convexTerminalMean,eq:detCCperiapsis-linear}, and
trust region constraints \cref{eq:trustRegion}.
\end{problem}

\begin{remark}
\cref{prob:convexSOC} is convex in variables due to \cref{remark:affine}, the discussion in \cref{sec:constraints-convex}, and \cref{remark:convexity-cost-and-TR}.
\end{remark}

\subsection{Sequential Convex Programming via \texttt{SCvx*} Algorithm}
\label{sec:SOCalgorithm}

With \cref{prob:originalSOC} reformulated in a deterministic, convex form as in \cref{prob:convexSOC}, we are ready to efficiently solve the problem via sequential convex programming (SCP).
In particular, this study utilizes a recent SCP algorithm called \texttt{SCvx*} \cite{Oguri2023b}.
The \texttt{SCvx*} algorithm extends a successful SCP algorithm known as \texttt{SCvx} \cite{Mao2016,Mao2018} to provide a theoretical guarantee for convergence to a feasible optimal solution by leveraging the augmented Lagrangian framework \cite{Bertsekas1982}.

\begin{algorithm}[tb]
\caption{Robust Trajectory Optimization under Uncertainty via \texttt{SCvx*}}
\label{alg:SCvx*}
\begin{algorithmic}[1]
\State 
Generate the initial reference trajectory $\{\lbar{\bm{X}}^{\mathrm{ref}}, \lbar{\bm{U}}^{\mathrm{ref}}, \mathbf{K}^{\mathrm{ref}}(=0), \bm{\Theta}^{\mathrm{ref}}\} $ 
% \Comment{by solving deterministic trajectory optimization}

\State 
Initialize \texttt{SCvx*} parameters
\Comment{Line 1 of Ref.~\cite{Oguri2023b}}

\While {the convergence criteria not met}
\Comment{Line 2 of Ref.~\cite{Oguri2023b}}

\State 
Obtain linearized block system matrices $\mathbf{A}, \mathbf{B}, \mathbf{C}, \mathbf{L}, \mathbf{Y} $
\Comment{based on the current reference $\lbar{\bm{X}}^{\mathrm{ref}}, \lbar{\bm{U}}^{\mathrm{ref}} $}

\State 
$\{\lbar{\bm{X}}^*, \lbar{\bm{U}}^*, \mathbf{K}^*, \bm{\Theta}^*, \bm{\xi}^*, \bm{\zeta}^* \} \gets $ solve \cref{prob:convexSOC}

\If{acceptance conditions met}
\Comment{\cref{eq:acceptance}}
\label{line:acceptance}

\State 
$\{\lbar{\bm{X}}^{\mathrm{ref}}, \lbar{\bm{U}}^{\mathrm{ref}}, \mathbf{K}^{\mathrm{ref}}, \bm{\Theta}^{\mathrm{ref}} \} \gets \{\lbar{\bm{X}}^*, \lbar{\bm{U}}^*, \mathbf{K}^*, \bm{\Theta}^* \}$
\Comment{solution update}

\State
Update Lagrange multipliers 
\Comment{Line 13-15, Algorithm 1 of Ref.~\cite{Oguri2023b}, with \cref{eq:LagrangeUpdate}}

\EndIf

\State
Update trust region
\Comment{\cref{eq:TRupdate}}

\EndWhile
\State \Return $\{\lbar{\bm{X}}^{\mathrm{ref}}, \lbar{\bm{U}}^{\mathrm{ref}}, \mathbf{K}^{\mathrm{ref}}, \bm{\Theta}^{\mathrm{ref}}  \}$
\end{algorithmic}
\end{algorithm}

\cref{alg:SCvx*} presents the solution algorithm based on \texttt{SCvx*}.
For the detail of the algorithm and its convergence property, see Ref.~\cite{Oguri2023b} (and Ref.~\cite{Oguri2023a}, which corrects a few minor typographical errors).
The returned quantities at the convergence $\{\lbar{\bm{X}}^*, \lbar{\bm{U}}^*, \mathbf{K}^*, \bm{\Theta}^*  \}$ approximately solves \cref{prob:originalSOC}, where the propagation of state uncertainty is approximated as the linear propagation of state mean and covariance.
Extending the solution method to incorporate non-Gaussian uncertainty quantification while retaining the computational tractability is an important direction of ongoing research.
There have been promising studies developed to address the problem of chance-constrained control under non-Gaussian uncertainty, such as those via Gaussian mixture model \cite{Kumagai2024,Boone2022d} and polynomial chaos expansion \cite{Nakka2019,Greco2022}, which have been successfully demonstrated in simpler systems, single-maneuver planning, or without FPC policy optimization (i.e., open-loop control).

The SCP algorithm developed in this study is nearly identical to the original \texttt{SCvx*}, although there are a few aspects that warrant some discussion, including some algorithmic modifications to extend the algorithm to be applicable to trajectory optimization under uncertainty.
These modifications have been also employed in our recent paper \cite{Kumagai2025}.
These aspects are highlighted in the following.

\subsubsection{Objective Improvement Measure}

A key in \cref{alg:SCvx*} is the \textit{solution acceptance} criterion in line \ref{line:acceptance}.
Similar to \cite{Oguri2023b,Mao2016}, the solution acceptance relies on the ratio of the nonlinear cost reduction $\Delta J$ to the approximated one $\Delta L$:
\begin{align}
\rho = \frac{\Delta J}{\Delta L} = 
\frac{J_{\mathrm{aug}}^{\mathrm{NL}}(\lbar{\bm{X}}^{\mathrm{ref}},\lbar{\bm{U}}^{\mathrm{ref}}, \mathbf{K}^{\mathrm{ref}}, \bm{\Theta}^{\mathrm{ref}}) - J_{\mathrm{aug}}^{\mathrm{NL}}(\lbar{\bm{X}}^*,\lbar{\bm{U}}^*, \mathbf{K}^*, \bm{\Theta}^*)}
{J_{\mathrm{aug}}^{\mathrm{NL}}(\lbar{\bm{X}}^{\mathrm{ref}},\lbar{\bm{U}}^{\mathrm{ref}}, \mathbf{K}^{\mathrm{ref}}, \bm{\Theta}^{\mathrm{ref}}) - J_{\mathrm{aug}}(\lbar{\bm{X}}^*,\lbar{\bm{U}}^*, \mathbf{K}^*, \bm{\Theta}^*, \bm{\xi}^*, \bm{\zeta}^*)}
\label{eq:rho}
\end{align}
where $J_{\mathrm{aug}}$ is given in \cref{eq:augmentedCost} while $J_{\mathrm{aug}}^{\mathrm{NL}}$ represents the cost of the original non-convex problem with augmented Lagrangian.
Note that one must use the same values of $\omega, \bm{\lambda}, \bm{\mu}$ when calculating all the $J_{\mathrm{aug}}^{\mathrm{NL}}$ and $J_{\mathrm{aug}}$ in \cref{eq:rho}.

Evaluation of $J_{\mathrm{aug}}^{\mathrm{NL}}$ requires approximation.
Exact evaluation of $J_{\mathrm{aug}}^{\mathrm{NL}}$ based on \cref{prob:originalSOC} is intractable since it necessitates evaluating the original $\Delta$V99 cost \cref{eq:SOCobjective99fuel} and penalty costs associated with chance constraints \cref{eq:SOCconstraints} and terminal constraint \cref{eq:SOCterminal} under the nonlinear filtering process \cref{eq:SOCfiltering} driven by the nonlinear SDE \cref{eq:SDE} and measurements \cref{eq:SOCobservations}.
To approximately compute $J_{\mathrm{aug}}^{\mathrm{NL}}$ in a tractable manner, this work takes full advantage of the linear covariance-based formulation developed in \cref{sec:tractableFormulation}.
Specifically, we define $J_{\mathrm{aug}}^{\mathrm{NL}}$ as:
\begin{align} \begin{aligned}
&J_{\mathrm{aug}}^{\mathrm{NL}}(\lbar{\bm{X}},\lbar{\bm{U}}, \mathbf{K}, \bm{\Theta}) = J_{\mathrm{aug}}[\lbar{\bm{X}},\lbar{\bm{U}}, \mathbf{K}, \bm{g}^{\mathrm{eq,relaxed}}(\lbar{\bm{X}},\lbar{\bm{U}}, \mathbf{K}, \bm{\Theta}), \bm{g}^{\mathrm{ineq,relaxed}}(\lbar{\bm{X}},\lbar{\bm{U}}, \mathbf{K}, \bm{\Theta})],
\label{eq:NL-aug-cost}
\end{aligned} \end{align}
where $\bm{g}^{\mathrm{eq,relaxed}}$ and $\bm{g}^{\mathrm{ineq,relaxed}}$ encompass the constraints that are relaxed and appended to $P(\cdot)$ in \cref{prob:convexSOC}:
\begin{align} \begin{aligned}
&\bm{g}^{\mathrm{eq,relaxed}}(\lbar{\bm{X}},\lbar{\bm{U}}, \mathbf{K}, \bm{\Theta}) = \bar{\bm{x}}_0 + \sum_{k=0}^{N-1} \int_{t_k}^{t_{k+1}}\bm{f}(\bar{\bm{x}}, \lbar{\bm{u}}_k, t)\diff t - \bm{x}_f
,\\
&
\bm{g}^{\mathrm{ineq,relaxed}}(\lbar{\bm{X}},\lbar{\bm{U}}, \mathbf{K}, \bm{\Theta}) = \left\{g_{\mathrm{GA-impact}}(\lbar{\bm{x}}_{k_{\mathrm{GA}}}, P_{k_{\mathrm{GA}}}^{1/2}, \theta_{k_{\mathrm{GA}}}) \right\}_{\forall k_{\mathrm{GA}}} \quad \text{(\cref{eq:detCCperiapsis})}
\end{aligned} \end{align}
This formulation implies that we utilize \cref{eq:convexSOCcost} for $\Delta$V99 evaluation while penalizing the constraint violation of \cref{eq:terminalMeanOriginal,eq:stoCCperiapsis}, where we replace \cref{eq:terminalMeanOriginal} with the nonlinear deterministic propagation of the state under the mean control $\lbar{\bm{u}} $ and \cref{eq:stoCCperiapsis} with the deterministic yet non-convex impact chance constraint given by \cref{eq:detCCperiapsis}.
Note here that the other constraints in \cref{prob:originalSOC} are considered to be automatically satisfied as a result of solving \cref{prob:convexSOC}.

\subsubsection{Step Acceptance and Trust Region Update}

The algorithm then uses $\rho$ to determine acceptance of the current iteration and the update of $\Delta_{\mathrm{TR}} $.
In this work, we employ a different scheme than the one proposed in Ref.~\cite{Oguri2023b}.
Similar to \cite{Kumagai2025}, we employ a different scheme than the one proposed in Ref.~\cite{Oguri2023b}, as follows:
\begin{align}
\begin{cases}
\text{accept}& \text{if}\ \rho \in [1 - \eta_0, 1 + \eta_0]
\\
\text{reject}& \text{else}
\end{cases}
\label{eq:acceptance}
\end{align}
where $1 \geq \eta_0 > \eta_1 > \eta_2 > 0$.
The trust region is updated as follows:
\begin{align}
\Delta_{\mathrm{TR}} \gets
\begin{cases}
\min{\{\alpha_2 \Delta_{\mathrm{TR}}, \Delta_{\mathrm{TR}, \max}\}}	& \text{if}\ \rho \in [1 - \eta_2, 1 + \eta_2]
\\
\Delta_{\mathrm{TR}}	& \text{elseif}\ \rho \in [1 - \eta_1, 1 + \eta_1]
\\
\max{\{\Delta_{\mathrm{TR}} / \alpha_1, \Delta_{\mathrm{TR}, \min}\}} & \text{else}
\end{cases}
\label{eq:TRupdate}
\end{align}
where $\alpha_1>1$ and $\alpha_2>1$, and $0 < \Delta_{\mathrm{TR}, \min} < \Delta_{\mathrm{TR}, \max} $ are the lower and upper bounds of the trust region radius.

These algorithmic modifications of \texttt{SCvx*} are motivated by the fact that the convex subproblem \cref{prob:convexSOC} is based on \textit{inexact linearization} of the state covariance propagation.
See our recent work \cite{Kumagai2025} for further discussion on this aspect.

\subsection{Implementation}

\subsubsection{Sequential Convex Programming}

\cref{t:params} lists the parameters used for the presented SCP algorithm.
For trust region scaling, $D_x = D_u = D_{\theta} = 1$ are used.
In each iteration, we use CVX \cite{Grant2014} with MOSEK \cite{Mosek2017} in Matlab to solve \cref{prob:convexSOC}.
We normalize all the physical quantities by unit length $l_{\mathrm{unit}}$ and time $t_{\mathrm{unit}}$, which are defined as $l_{\mathrm{unit}} = 1$ AU and $t_{\mathrm{unit}} = \sqrt{\mu_{\mathrm{sun}} / l_{\mathrm{unit}}^3 } $.

\begin{table}[tb]
\centering
\caption{Parameters used for the SCP algorithm}
\label{t:params}
{\footnotesize
\begin{tabular}{lcccccccccccc}
\hline \hline
{Symbol} & $ \epsilon_{\mathrm{opt}} $ & $ \epsilon_{\mathrm{feas}} $ & $ \{\eta_0, \eta_1, \eta_2\} $ & $\{\alpha_1, \alpha_2, \beta, \gamma\} $ & $w^{(1)} $ & {$w_{\mathrm{max}} $} & $\Delta_{\mathrm{TR}}^{(1)} $ & $\Delta_{\mathrm{TR, min}} $  & $\Delta_{\mathrm{TR, max}} $ & $\tau$
\\ \hline
Value & $10^{-6}$ & $10^{-6}$ & $\{1.0,\, 0.5,\, 0.1\} $ & $\{2.0,\, 3.0,\, 2,\, 0.95 \}$ & $10^2$ & {$10^{10}$} & $0.1$ & $10^{-8}$ & $1.0$ & $1.1$
\\
\hline \hline
\end{tabular}}
\end{table}

\subsubsection{Safeguard}
As discussed in Ref.~\cite{Oguri2024c}, we sometimes encounter an undesirable situation where numerically ill-conditioned problems prevent a convex solver from finding the solution to \cref{prob:convexSOC}.
It is observed that such numerical issues are often due to large penalty weights, $w^{(i)}$.
Thus, as a safeguard, we apply the following remedy when the SCP algorithm encounters such numerical issues:
reduce $w^{(i)} $ by a factor $\beta > 1$, i.e., $w^{(i+1)} = w^{(i)} / \beta $, and re-run the convex programming without changing any other parameters or reference trajectory.

\subsubsection{Nonlinear Monte Carlo Simulation for Solution Verification}

After obtaining a solution, it is insightful to perform nonlinear Monte Carlo simulations to verify that the approximations made for the tractable formulation are reasonable.
To that end, we utilize extended Kalman filter (EKF) in place of KF given in \cref{eq:estProcessLin} for filtering, where the time update is replaced with the nonlinear mean propagation, the linear system matrices $A_k, B_k, \bm{c}_k, G_{\mathrm{exe},k}, G_k $ are propagated about the nonlinear mean state, and the innovation process utilizes the nonlinear model given in \cref{eq:innovationProcess}.
The Brownian motion in the SDE \cref{eq:SDE} is approximated in the same manner as in Eq. 64 of Ref.~\cite{Oguri2022c}.

We seek a way to convert the policy defined in \cref{eq:SOClinPolicy}, which takes the stochastic process $\bm{z}_k $ defined in \cref{eq:z-process} to calculate statistical TCMs, to a policy that is based on $\hat{\bm{x}}_k$, without affecting the optimization result.
Such conversion is desirable when applying optimized FPC policies in nonlinear Monte Carlo because it is less straightforward to calculate the nonlinear equivalent of the stochastic process $\bm{z}_k $.
Hence, we utilize the result stated in \cref{lemma:state-estimate-FPC} to equivalently convert \cref{eq:SOClinPolicy} (the $\bm{z}_k$-based FPC policy) to \cref{eq:FPC-stateEstimate}, which feeds the deviation of $\hat{\bm{x}}_k$ from $\lbar{\bm{x}}_k$ to calculate statistical TCMs.
With this $\hat{\bm{x}}_k$-based FPC policy, we can directly use the state estimate $\hat{\bm{x}}_k $ from EKF to calculate $\bm{u}_k $ in nonlinear Monte Carlo.

\begin{lemma}
\label{lemma:state-estimate-FPC}
The output feedback control policy $\bm{u}_k = \lbar{\bm{u}}_k + K_{k}\bm{z}_k$ given in \cref{eq:SOClinPolicy} is equivalently expressed as the following state-estimate history feedback control policy:
\begin{align}
\bm{u}_k = \bar{\bm{u}}_k + \sum_{i=0}^{k} \hat{K}_{k,i}(\hat{\bm{x}}_i - \bar{\bm{x}}_i)
\label{eq:FPC-stateEstimate}
\end{align}
where $\hat{K}_{k,i}$ correspond to sub-matrices of $\hat{\mathbf{K}}$ as follows:
\begin{align}
\hat{\mathbf{K}} = 
\begin{bmatrix}
\hat{K}_{0,0} & 0 & 0 & \cdots & 0 \\
\hat{K}_{1,0} & \hat{K}_{1,1} & \ddots & \cdots & 0 \\
\vdots & \vdots & \ddots & 0 & 0 \\
\hat{K}_{N-1,0} & \hat{K}_{N-1,1} & \cdots & \hat{K}_{N-1,N-1} & 0 \\
\end{bmatrix}
,\quad
\hat{\mathbf{K}} = \mathbf{K}(I + \mathbf{BK})^{-1}
\label{eq:FPC-stateEstimateGainDef}
\end{align}
where $\mathbf{K} $ solves \cref{prob:convexSOC}, and $(I + \mathbf{BK})$ is invertible.
\end{lemma}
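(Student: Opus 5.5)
We will work with the stacked vectors $\bm{Z} = [\bm{z}_0^\top,\dots,\bm{z}_N^\top]^\top$, $\hat{\bm{X}}$ and $\lbar{\bm{X}}$, and write everything in the block-matrix language of \cref{eq:blockMatrixEq}. The target identity is $\bm{Z} = (I+\mathbf{BK})^{-1}(\hat{\bm{X}}-\lbar{\bm{X}})$, equivalently $\hat{\bm{X}}-\lbar{\bm{X}} = (I+\mathbf{BK})\bm{Z}$. Once this holds, the stacked policy $\bm{U} = \lbar{\bm{U}} + \mathbf{K}\bm{Z}$ becomes $\bm{U} = \lbar{\bm{U}} + \mathbf{K}(I+\mathbf{BK})^{-1}(\hat{\bm{X}}-\lbar{\bm{X}}) = \lbar{\bm{U}} + \hat{\mathbf{K}}(\hat{\bm{X}}-\lbar{\bm{X}})$. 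Extracting the $k$-th block row with $E_{u_k}$ then gives \cref{eq:FPC-stateEstimate}.

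\textbf{Step 1: unroll the $\bm{z}$ process.} Unrolling \cref{eq:z-process} gives $\bm{Z} = \mathbf{A}(\hat{\bm{x}}_0^- - \lbar{\bm{x}}_0) + \mathbf{L}\bm{Y}$, using the same $\mathbf{A}$ and $\mathbf{L}$ as in \cref{eq:blockMatrixWrittenOut}. This relies on the convention $\bm{z}_0 = \hat{\bm{x}}_0 - \lbar{\bm{x}}_0 = \hat{\bm{x}}_0^- - \lbar{\bm{x}}_0 + L_0\widetilde{\bm{y}}_0^-$, which is consistent with the first block row of $\mathbf{L}$.

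\textbf{Step 2: compare with the estimate and mean.} Substituting $\bm{U} = \lbar{\bm{U}} + \mathbf{K}\bm{Z}$ into \cref{eq:blockMatrixEq} gives
\[
\hat{\bm{X}} = \mathbf{A}\hat{\bm{x}}_0^- + \mathbf{B}\lbar{\bm{U}} + \mathbf{C} + \mathbf{BK}\bm{Z} + \mathbf{L}\bm{Y}.
\]
The mean is $\lbar{\bm{X}} = \mathbf{A}\lbar{\bm{x}}_0 + \mathbf{B}\lbar{\bm{U}} + \mathbf{C}$, as in \cref{eq:statistics}. Subtracting and using Step 1 yields $\hat{\bm{X}} - \lbar{\bm{X}} = \bm{Z} + \mathbf{BK}\bm{Z}$, which is the desired identity.

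\textbf{Step 3: invertibility and causality.} $\mathbf{B}$ is strictly block lower triangular, with zero first block row and $B_k$ entering only below the diagonal. $\mathbf{K}$ is block lower triangular, with $K_k$ acting on $\bm{z}_k$ (or on the $\bm{z}_i$, $i\le k$, in a history-feedback form per \cite{Oguri2024}). Hence $\mathbf{BK}$ is strictly block lower triangular, so it is nilpotent. Therefore $I+\mathbf{BK}$ is unit block lower triangular and invertible, with
\[
(I+\mathbf{BK})^{-1} = \sum_{j=0}^{N}(-\mathbf{BK})^j,
\]
which is again block lower triangular. It follows that $\hat{\mathbf{K}}$ is block lower triangular of the form in \cref{eq:FPC-stateEstimateGainDef}. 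The product's dimensions give $N$ block rows and $N+1$ block columns, and the last block column vanishes because $\hat{\bm{x}}_N$ can only influence $\bm{u}_N$, which does not exist. So only $\hat{K}_{k,i}$ with $i\le k$ survive, and the policy is causal as claimed.

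The main obstacle is the bookkeeping in Steps 1 and 3. One must check carefully that the block structures of $\mathbf{A}$, $\mathbf{L}$, $\mathbf{B}$, $\mathbf{K}$ line up, including the indexing offset between the $N$ control blocks and the $N+1$ state blocks and the treatment of $\bm{z}_0$ versus $\hat{\bm{x}}_0^-$. The same check must confirm that the strict lower-triangularity of $\mathbf{BK}$ really holds. The algebra in Step 2 is routine once those conventions are fixed.
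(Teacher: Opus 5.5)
Your proof is correct, but it is organized differently from the paper's.

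\textbf{The paper's route.} The paper treats the two policies as two separate closed loops and compares the resulting filtered trajectories:
\begin{itemize}
\item Under the $\bm z$-policy it cites Eq.~45 of Oguri (2024): $\hat{\bm X}-\lbar{\bm X} = (I+\mathbf{BK})[\mathbf{A}(\hat{\bm x}_0^- - \lbar{\bm x}_0)+\mathbf{L}\bm Y]$.
\item Under the $\hat{\bm x}$-policy it cites Eq.~40 of Ridderhof et al.: $\hat{\bm X}-\lbar{\bm X} = (I-\mathbf{B}\hat{\mathbf K})^{-1}[\cdots]$.
\item It matches the two using the identity $(I-\mathbf{B}\hat{\mathbf K})^{-1}=I+\mathbf{BK}$, imported from Eqs.~41--43 of that reference.
\end{itemize}

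\textbf{Your route.} You observe that the bracket is exactly the stacked $\bm Z$. You then invert $I+\mathbf{BK}$ and substitute into $\bm U=\lbar{\bm U}+\mathbf K\bm Z$. What this buys:
\begin{itemize}
\item $\hat{\mathbf K}$ appears constructively rather than being verified after the fact.
\item Everything is derived from the paper's own block-matrix definitions, with nothing imported.
\item Your nilpotency argument makes explicit the invertibility of $I+\mathbf{BK}$ and the causal structure of $\hat{\mathbf K}$, which the paper only asserts.
\item In your notation the imported identity is a one-liner: $I-\mathbf{B}\hat{\mathbf K} = I-\mathbf{BK}(I+\mathbf{BK})^{-1} = (I+\mathbf{BK})^{-1}$.
\end{itemize}

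\textbf{What the paper's framing covers that yours does not yet state.} Your argument directly proves one direction. Along the closed loop generated by the $\bm z$-policy, the applied controls coincide with $\lbar{\bm U}+\hat{\mathbf K}(\hat{\bm X}-\lbar{\bm X})$. For equivalence in the other direction, you need one more sentence. If the $\hat{\bm x}$-policy is run on its own, its closed loop is
\[
(I-\mathbf{B}\hat{\mathbf K})(\hat{\bm X}-\lbar{\bm X})=\mathbf{A}(\hat{\bm x}_0^- - \lbar{\bm x}_0)+\mathbf{L}\bm Y .
\]
This system has a unique solution because $\mathbf{B}\hat{\mathbf K}$ is strictly block lower triangular, which follows from your Step~3. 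The $\bm z$-policy trajectory satisfies this equation, so it must be that solution. The paper's trajectory-comparison framing covers this direction by construction.
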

\begin{proof}
See \cref{proof:state-estimate-FPC}.
\end{proof}

\begin{figure}[tb]
\centering 
\includegraphics[width=0.9\linewidth]{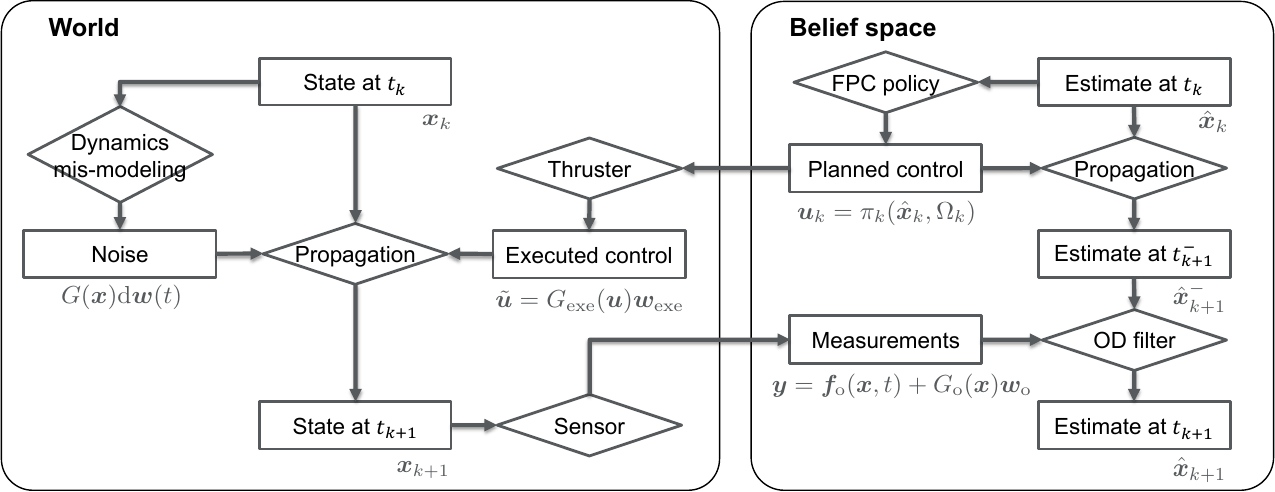}
	\caption{\label{f:MCflow} Nonlinear Monte Carlo flow chart, illustrating a simulation flow from $t_k$ to $t_{k+1}$.}
\end{figure}

\cref{f:MCflow} illustrates the flow of nonlinear MC simulations implemented in this study.
We propagate two sets of trajectories for each sample: true trajectory $\bm{x}(t) $ in the world and estimate trajectory $\hat{\bm{x}}(t) $ in the brief space.
For $\bm{x}(t) $, we incorporate noise acting on the acceleration due to dynamics mis-modeling (while approximating the Brownian motion in SDE in a similar manner to \cite{Oguri2022c}) in the nonlinear propagation as well as execution error based on Gates model.
In belief space, we calculate planned control $\bm{u}_k$ that incorporates statistical corrections based on the optimized FPC, propagate the estimate $\hat{\bm{x}}(t) $ nonlinearly, and perform OD to obtain $\hat{\bm{x}}_{k+1} $ via extended Kalman filter (EKF) by fusing measurements.
Again, calculation of $\bm{u}_k$ is based on the state-estimate history feedback FPC policy given by \cref{eq:FPC-stateEstimate};
it is observed that the state-estimate feedback FPC policy \cref{eq:FPC-stateEstimate} leads to MC trajectories that are more consistent with the linearly predicted covariance evolution, compared to the policy based on \cref{eq:SOClinPolicy}.
Although these two policies are equivalent in linear sense, their performance difference in nonlinear MC makes sense, since \cref{eq:SOClinPolicy} uses $\bm{z}$, which needs to be linearly propagated via \cref{eq:z-process}, whereas \cref{eq:FPC-stateEstimate} utilizes $\hat{\bm{x}} $, which may be estimated nonlinearly.

%!TEX root = main.tex
\section{Numerical Example}
\label{sec:examples}

\subsection{Scenario: Earth-Mars-Ceres transfer}

We consider an Earth-Mars-Ceres transfer with a GA at Mars (MGA) to demonstrate the proposed method.
The spacecraft is launched from Earth on 2030 Dec 18, performs the MGA on 2031 Aug 15, and arrives at Ceres on 2035 Aug 14, with the total time-of-flight (ToF) being $1700$ days ($\approx4.66$ years).
The position and velocity of the celestial bodies are defined based on the DE430 planetary ephemeris model provided by JPL.
The trajectory is discretized in time with intervals about $\Delta t_k = 45$ days, resulting in 7 nodes for the Earth-Mars leg and 34 nodes for the Mars-Ceres leg.
We assume $T_{\max} = 0.35 $ N, $m=3000$ kg, and $u_{\max} = T / m $, and the dynamics to be Keplerian about Sun, with instantaneous change in velocity at Mars due to the MGA.

The Earth launch is modeled as:
\begin{align}
\bm{r}_0 - \bm{r}_{\mathrm{Earth}}(t_0) = 0,\quad
\norm{\bm{v}_0 - \bm{v}_{\mathrm{Earth}}(t_0)}_2 \leq v_{\infty,\max},
\end{align}
that is, the spacecraft position is the same as that of Earth at the launch epoch $t_0$, while the velocity relative to Earth must be within the launcher's capability denoted by $v_{\infty,\max}$.
We assume $v_{\infty,\max} = 3.5$ km/s, which corresponds to $C_3 = v_{\infty,\max}^2 = 12.25\ \mathrm{km^2/s^2} $.

The MGA is modeled by using the patched-conic model (see \cref{sec:GA}).
We assume the Mars gravitational parameter to be $\mu_p = 42828\ \mathrm{km^3/s^2}$ and the minimum periapsis radius for Mars GA to be $r_{p\, \min} = 3689.5$ km, about 300 km altitude above the Mars surface.

The Mars arrival is modeled as rendezvous, i.e., the spacecraft position and velocity must match those of Mars at the arrival epoch.

\subsection{Deterministically-optimal Solution}
The transfer problem discussed above is first solved without considering uncertainties to generate the initial guess for the proposed method.
We solve the deterministic trajectory optimization problem using the SCP method presented in Ref.~\cite{Kumagai2024c} with $s_k = t_f - t_0 \ \forall k $, i.e., no time discretization mesh adaptation.

\cref{f:CeresDet} summarizes the deterministic optimal solution.
\cref{f:CeresDet_traj} shows the transfer trajectory projected onto the $x$-$y$ plane, where the origin is the Sun;
\cref{f:CeresDet_ctrl} shows the optimal control profile, where
$\alpha,\beta$ are the inertial longitude and latitude of the thrust direction, calculated as: $\alpha = \arctan2{(u_2/u_1)} $ and $\beta = \arcsin{(u_3)} $;
$\arctan2(\cdot)$ is the four-quadrant inverse tangent function.
$\norm{T} $ is the thrust magnitude in Newton.
Some statistics of the deterministic optimal solution are summarized in \cref{t:stats}, which indicates that the optimal trajectories fully leverages MGA (because $r_p = r_{p\, \min} $) and the launch capability ($v_{\infty,0} = \norm{\bm{v}_0 - \bm{v}_{\mathrm{Earth}}(t_0)}_2 = v_{\infty,\max} $), with $\Delta$V requirement 6.74 km/s;
RA and DEC in the table are the right ascension and declination of the launch in the inertial frame.
This table also includes the data for statistically-optimal solution, which is discussed in the next subsection.

\begin{figure}[tb]
\centering \subfigure[\label{f:CeresDet_traj} Deterministic optimal trajectory]
{\includegraphics[width=0.49\linewidth]{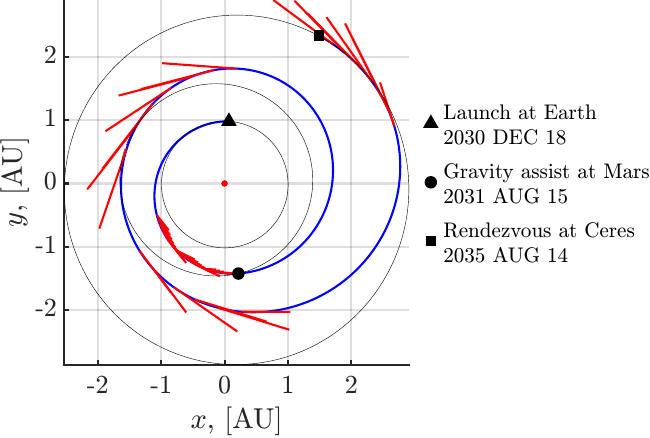}}
\centering \subfigure[\label{f:CeresDet_ctrl} Time history of deterministic optimal control]
{\includegraphics[width=0.49\linewidth]{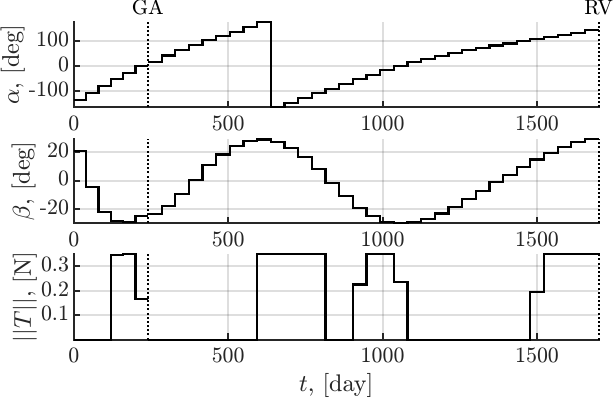}}
	\caption{\label{f:CeresDet} Deterministic optimal Earth-Mars-Ceres low-thrust transfer}
\end{figure}

\begin{table}[tb]
\centering
\caption{Comparison of deterministic-optimal and statistically-optimal solutions}
\label{t:stats}
\footnotesize
\begin{tabular}{lcccccc}
\hline \hline
& Launch: \{$v_{\infty,0}$, RA, DEC\} & MGA altitude: \{nominal, min\} & Total $\Delta$V: \{nominal, $\Delta$V99\}
\\ \hline
Deterministic
& \{3.5 km/s, -145.3 deg, 37.3 deg\} & \{300 km, N/A\} & \{6.74 km/s, N/A\}
\\ \hline
Statistical
& \{3.5 km/s, -144.7 deg, 36.0 deg\} & \{417 km, 368 km\} & \{6.81 km/s, 7.09 km/s\}
\\
\hline \hline
\end{tabular}
\end{table}

\subsection{Statistically-optimal Solution}

Now, we apply the proposed method to the same scenario while considering operational uncertainties and ensuring the robustness, including the statistical GA impact constraint and arrival dispersion constraint, minimizing $\Delta$V99.

\subsubsection{Operational Uncertainty Modeling}

We consider various operational uncertainties, including launch dispersion, OD error, maneuver execution error, and stochastic force.
These uncertainties are modeled as:
\begin{align}
P_0 = \mathtt{blkdiag}(\sigma_{r0}^2 I_3, \sigma_{v0}^2 I_3),\quad
G_{\mathrm{obs,phase}} = \mathtt{blkdiag}(\sigma_{r,\mathrm{phase}}^{\mathrm{OD}} I_3, \sigma_{v,\mathrm{phase}}^{\mathrm{OD}} I_3),\quad
G = \sigma_{\mathrm{acc}} \sqrt{\Delta t_{\mathrm{WN}}} \cdot I_3
\label{eq:uncertainty}
\end{align}
where $\mathtt{blkdiag}(\cdot) $ forms a block diagonal matrices.
\cref{t:uncertainty} summarizes specific values of the standard deviations of these uncertain quantities assumed for this numerical example.
$\Delta t_{\mathrm{WN}}$ denotes the time discretization step for approximation of SDE with white noise (see, for instance, Eq. 64 of \cite{Oguri2022c});
this roughly corresponds to experiencing zero-mean acceleration with standard deviation $\sigma_{\mathrm{acc}}$ with update interval $\Delta t_{\mathrm{WN}}$.
We take $\Delta t_{\mathrm{WN}} = 1$ hour.
Note that $(\sigma_{\mathrm{acc}} =) 1.0\ \mathrm{\mu m/s^2}$ is slightly greater than the solar radiation pressure acceleration Psyche spacecraft might experience at 1 AU (based on the spacecraft property in \cite{Oguri2020e}).
$\sigma_1 = \sigma_3 = 0$ is assumed in this example.

In \cref{eq:uncertainty}, the observation is modeled as a result of orbit determination processes over $\Delta t_k$ and hence $G_{\mathrm{obs}}$ takes the form of full-state measurements, where the OD solutions are delivered about every 45 days ($\approx \Delta t_k$).
However, the formulation given in \cref{sec:SOCformulation,sec:tractableFormulation} can accommodate any partial-state measurements, such as range, range-rate, Doppler shift, optical navigation (OpNav), among others;
in such cases, we can consider having some nodes where measurements are taken but not control is applied, to have more frequent measurements for realism.
For instance, \cite{Oguri2024} considers horizon-based OpNav measurements, where OpNav measurements are taken every 0.8 days while (impulsive) control is applied every 2.3 days.

Also note that ``phase'' in the subscript of $G_{\mathrm{obs}}$ corresponds to one of the phases: Launch, Cruise, MGA, and Arrival, to model different levels of OD uncertainties we may expect at different phases of the mission.
These values are summarized in \cref{t:nav}, where the OD error immediately after launch is greater than that of the cruise phase, while those immediately before MGA and Ceres arrival are smaller since we can expect more frequent orbit determination campaigns before these critical events;
also, OpNav information would likely become available as the spacecraft approaches Mars or Ceres.

\begin{table}[tb]
\centering
\caption{Operational uncertainty model. Values represent the standard deviations.}
\label{t:uncertainty}
\begin{tabular}{c|c|c|c|c|c|c|c}
\hline \hline
& \multicolumn{2}{c|}{Launch dispersion} & \multicolumn{2}{c|}{OD error (cruise)} & \multicolumn{2}{c|}{Execution error (Gates)} & \multirow[t]{2}{*}{Stoch. accel.} \\
\hline
& position & velocity & position & velocity & magnitude & pointing & \\
\hline
Symbol & $\sigma_{r0}$ & $\sigma_{v0}$ & $\sigma_{r,\mathrm{cruise}}^{\mathrm{OD}} $ & $\sigma_{v,\mathrm{cruise}}^{\mathrm{OD}} $ & $\sigma_{2}$ & $\sigma_{4}$ & $\sigma_{\mathrm{acc}}$
\\
\hline
Value & $1.0 \times 10^5$ km & 10 m/s & 30 km & 0.3 m/s & 0.5\% & 0.5 deg & 1.0 $\mathrm{\mu m/s^2} $ \\
\hline \hline
\end{tabular}
\end{table}

\begin{table}[tb]
\centering
\caption{OD error standard deviations at different phases (position and velocity)}
\label{t:nav}
\begin{tabular}{ccccccc}
\hline \hline
Cruise & Launch & MGA & Arrival
\\ \hline
$\{ \sigma_{r,\mathrm{cruise}}^{\mathrm{OD}},  \sigma_{v,\mathrm{cruise}}^{\mathrm{OD}}\} $ &
$\{3 \sigma_{r,\mathrm{cruise}}^{\mathrm{OD}}, 3 \sigma_{v,\mathrm{cruise}}^{\mathrm{OD}}\}$ &
$\{1/3 \cdot \sigma_{r,\mathrm{cruise}}^{\mathrm{OD}}, 1/3 \cdot \sigma_{v,\mathrm{cruise}}^{\mathrm{OD}}\}$ &
$\{0.1 \sigma_{r,\mathrm{cruise}}^{\mathrm{OD}}, 0.1 \sigma_{v,\mathrm{cruise}}^{\mathrm{OD}}\}$ 
\\
\hline \hline
\end{tabular}
\end{table}

\subsubsection{Statistical Constraints}

We impose chance constraints on the control magnitude and on the GA pariapsis radius and terminal distribution constraint.
The following parameters for these statistical constraints are assumed in the numerical example: 
\begin{align}
P_f = \mathtt{blkdiag}(\sigma_{rf}^2 I_3, \sigma_{vf}^2 I_3)
,\quad
\sigma_{rf} = 10^5\ \mathrm{km}
,\quad
\sigma_{vf} = 100\ \mathrm{m/s}
,\quad
\varepsilon_u = 
\varepsilon_{\mathrm{GA}} = 10^{-3}
\label{eq:constraint}
\end{align}

\subsubsection{Optimization Result}

The developed solution method is applied to solve the problem, which converged in 69 iterations of convex programming and took around 1.5 hours.
As pointed out in \cref{sec:intro}, a main bottleneck of the presented approach is the computational complexity, and some recent studies based on full covariance matrix \cite{Kumagai2024b,Kumagai2025} are found more computationally efficient at the cost of some minor drawbacks (see \cref{sec:intro} for more discussion).

\cref{f:CeresSto1} shows the optimization result, with
\cref{f:CeresStoTraj} illustrating the nominal trajectory and control directions projected on the x-y plane, while \cref{f:CeresStoCtrl} depicting the nominal control history (solid line) with the bottom plot overlaying the 99.9\% control magnitude bound under statistical FPC (dotted line).
First, it is evident that the control profile in \cref{f:CeresStoCtrl} is largely different from \cref{f:CeresDet_ctrl}.
In particular, we can see that the nominal thrust level is reduced at immediate before the MGA and arrival at Ceres, implying that margins are automatically introduced to the nominal control to accommodate necessary statistical FPC activities.
Conceptually, these margins are similar to heuristic margins quantity called \textit{forced coasting period} and \textit{duty cycle} \cite{Rayman2007}, which are introduced manually at sensitive locations on low-thrust trajectories.
The significance here is that this kind of margin quantities are discovered by the solution method with no specific inputs from mission designers, enabled by the stochastic optimal control approach that integrates OD and FPC processes into mission design.

\begin{figure}[tb]
\centering \subfigure[\label{f:CeresStoTraj} Trajectory x-y projection]
{\includegraphics[width=0.47\linewidth]{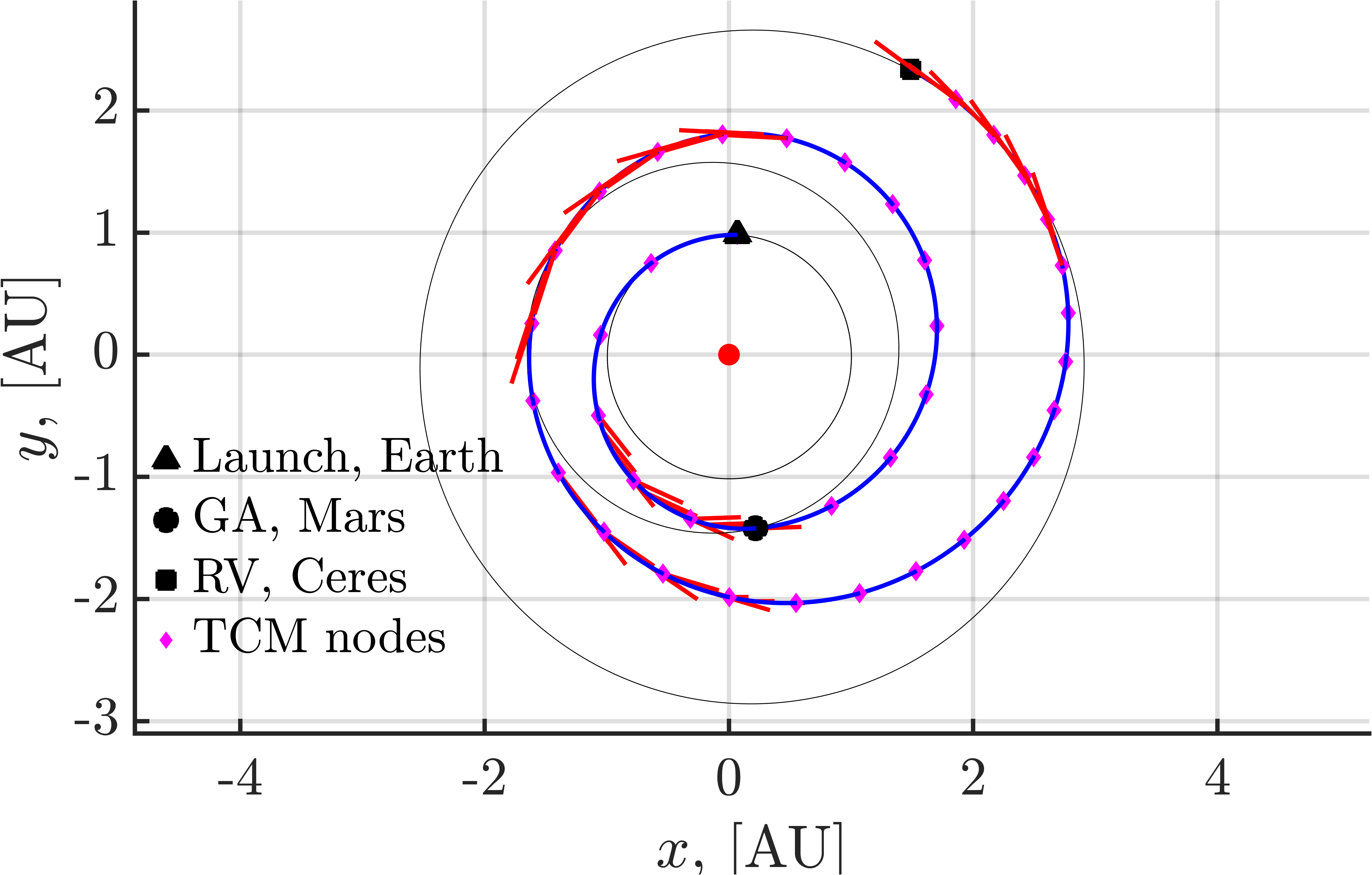}}
\centering \subfigure[\label{f:CeresStoCtrl} Control history: Nominal (solid) and 99.9\% bound (dotted)]
{\includegraphics[width=0.47\linewidth]{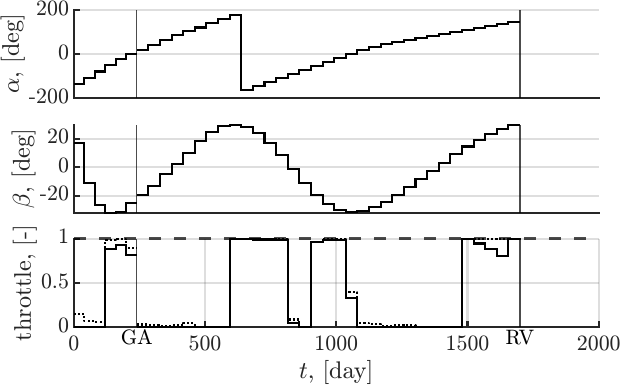}}
	\caption{\label{f:CeresSto1} Statistically-optimal Earth-Mars-Ceres low-thrust transfer solution}
\end{figure}

\subsubsection{Monte Carlo Simulation}
To confirm the robustness of the designed reference trajectory and associated FPC policies, nonlinear Monte Carlo (MC) simulations are performed with 100 samples.
Recall \cref{f:MCflow} for the flow of nonlinear MC simulations.

\begin{figure}[tb]
\centering \subfigure[\label{f:CeresStoTrajMC} MC trajectories projected on x-y plane.]
{\includegraphics[width=0.55\linewidth]{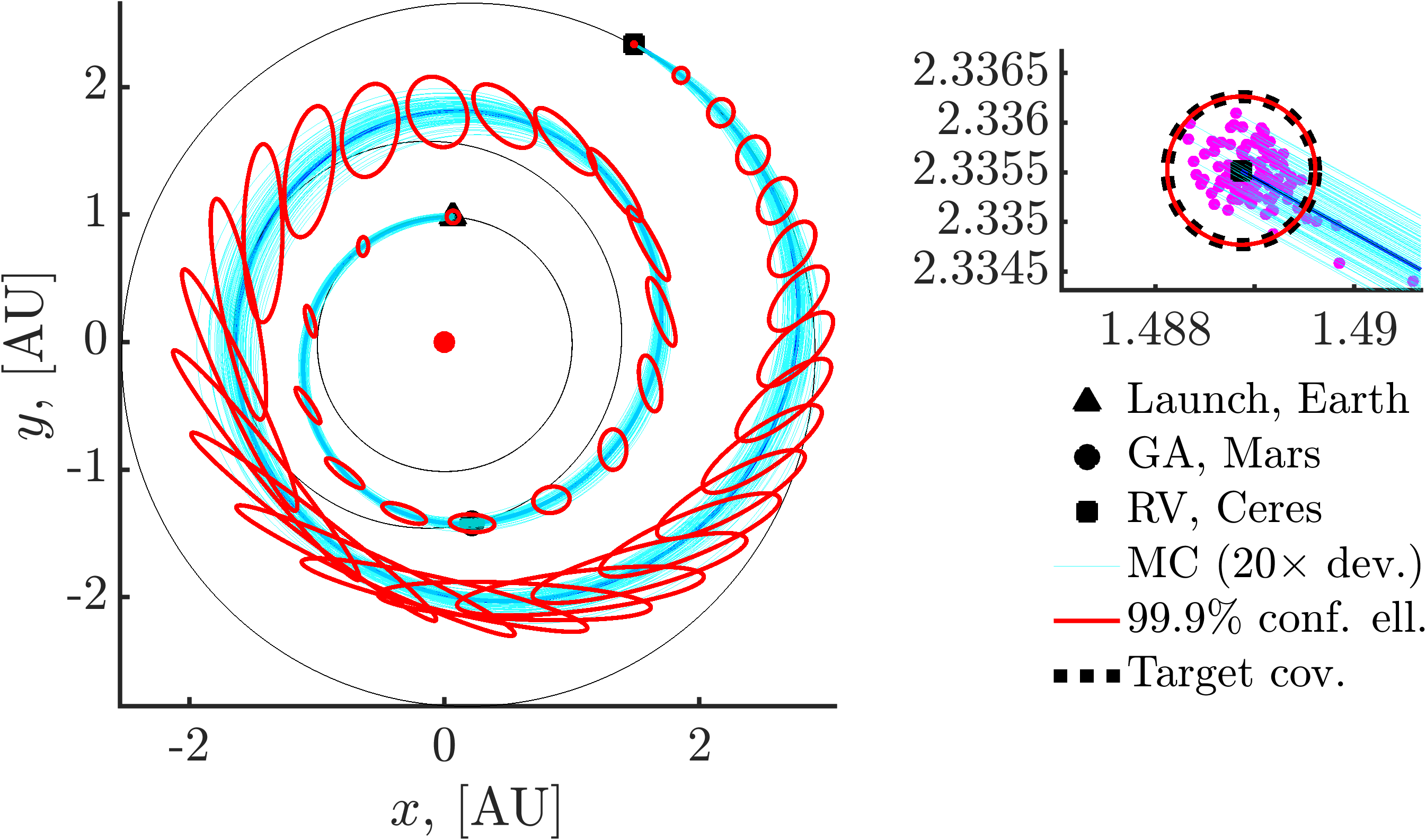}}
\centering \subfigure[\label{f:CeresStoTrajMC1} MC trajectories in three projections (not to scale).]
{\includegraphics[width=0.44\linewidth]{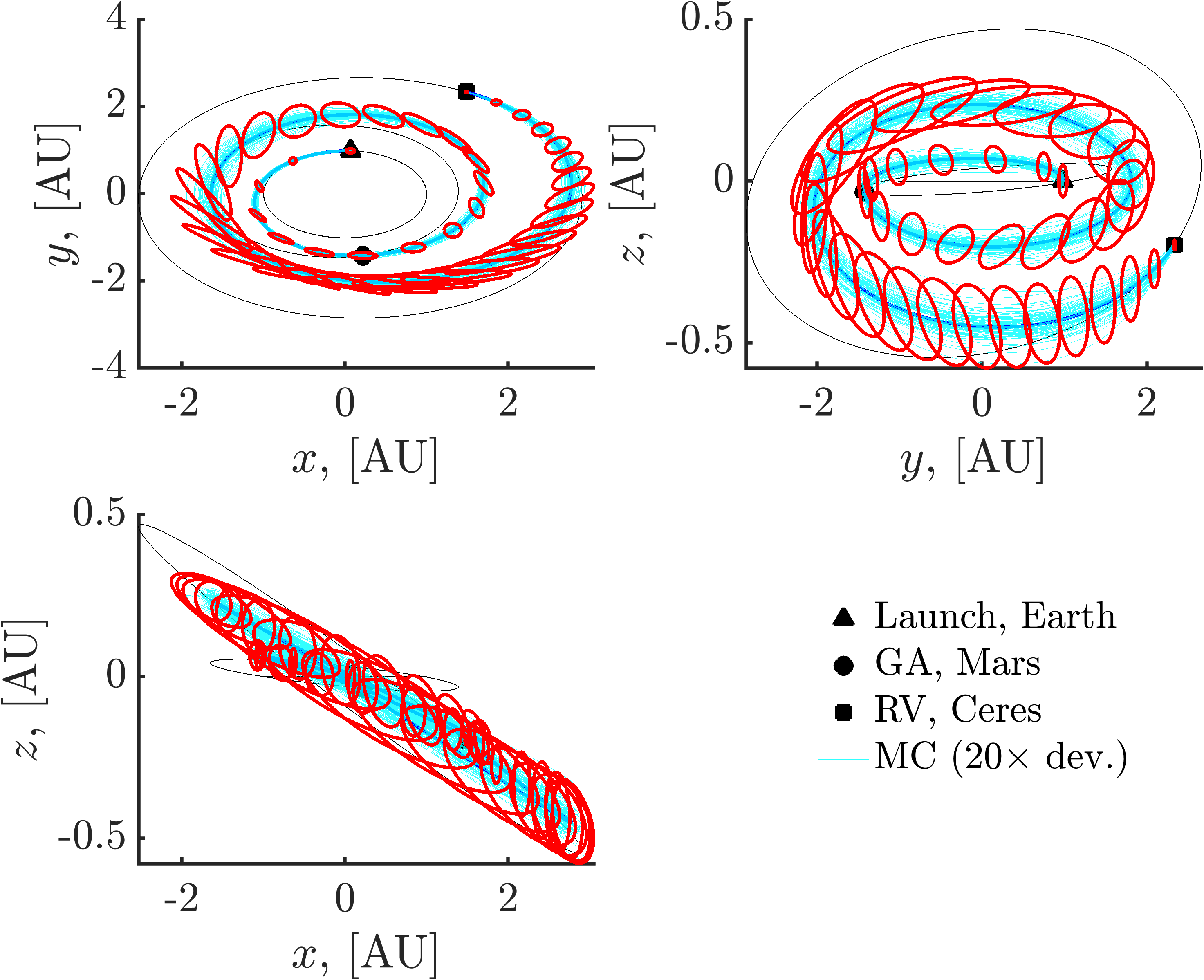}}
\centering \subfigure[\label{f:CeresStoTrajMCstate} State deviation: MC samples (gray) and 99.9\% confidence (black)]
{\includegraphics[width=0.53\linewidth]{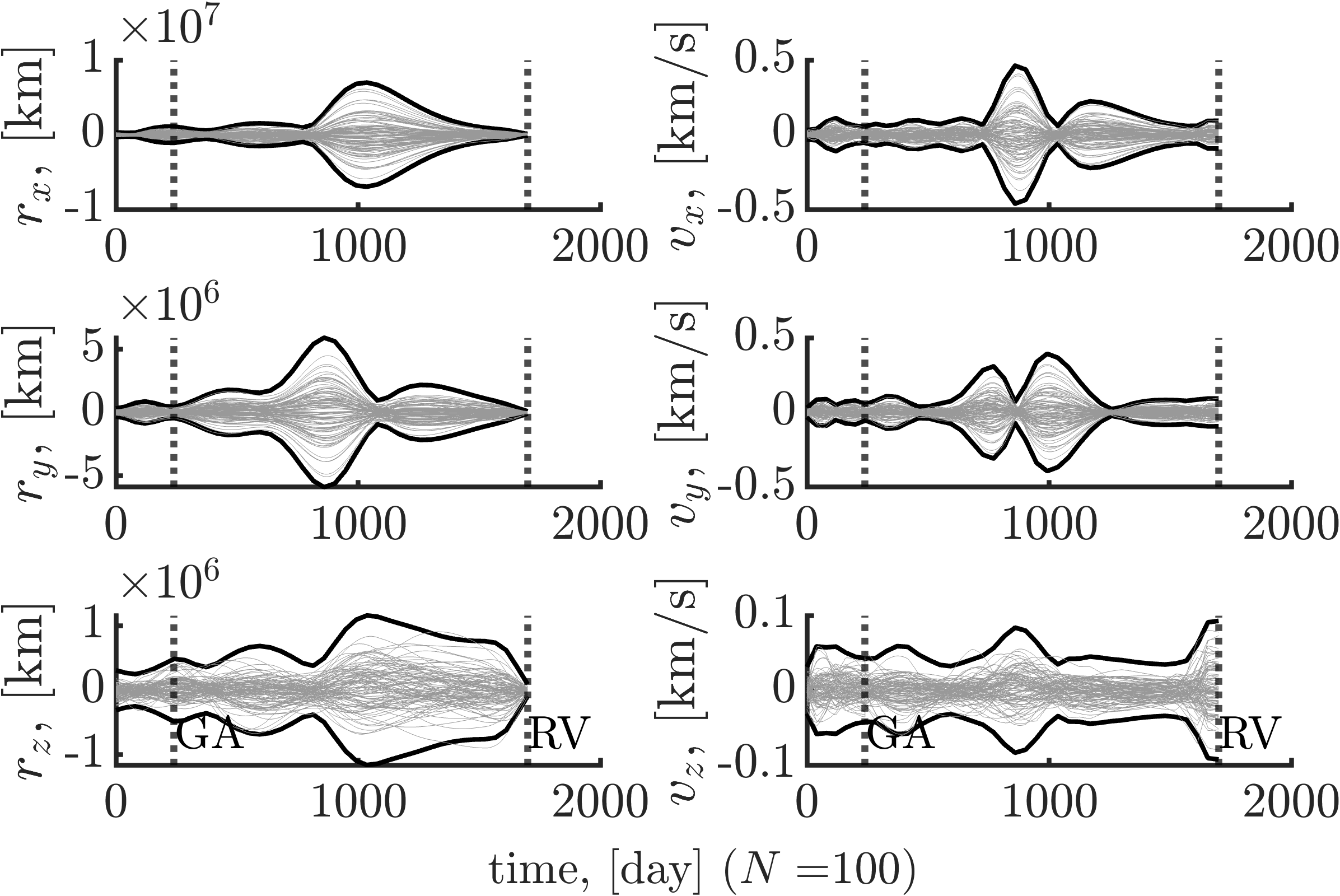}}
	\caption{\label{f:CeresStoMC} Monte Carlo simulation results. Curves in (a) and (b) represent: nominal (blue), MC samples (light blue), and 99.9\% confidence ellipses (red).}
\end{figure}

\cref{f:CeresStoMC} summarizes the MC trajectory results.
\cref{f:CeresStoTrajMC} shows the MC trajectories, projected on the x-y plane, where the deviation from the reference trajectory is amplified by 20 times for visualization purpose.
The red ellipses overlaid represent the covariance computed inside the proposed solution method (and used for calculation of $\Delta$V99 and statistical constraints), highlighting the overall accuracy of UQ results performed within the optimization.
The black ellipse in the zoomed view represents the covariance of the distributional constraint in \cref{eq:SOCterminal}, which confirms the successful arrival at Ceres with the prescribed accuracy, albeit some samples are not within the ellipse, indicating the effect of nonlinear dynamics.
\cref{f:CeresStoTrajMC1} includes the same trajectories but projected in x-y, y-z, and x-z planes, further demonstrating the accuracy of LinCov UQ in this case.
\cref{f:CeresStoTrajMCstate} shows the state component time profiles of the MC simulations, which illustrates:
1) state dispersion is kept relatively small until MGA, 
2) state dispersion significantly grows around the middle of the trajectory, which, as clear from \cref{f:CeresStoCtrl}, corresponds to the period where the optimized FPC does not apply much statistical corrections,
3) position dispersion rapidly decreases as the spacecraft approaches Ceres.

\begin{figure}[tb]
\centering \subfigure[\label{f:CeresStoCtrlMC} Control profile: nominal (black) and MC samples (blue)]
{\includegraphics[width=0.47\linewidth]{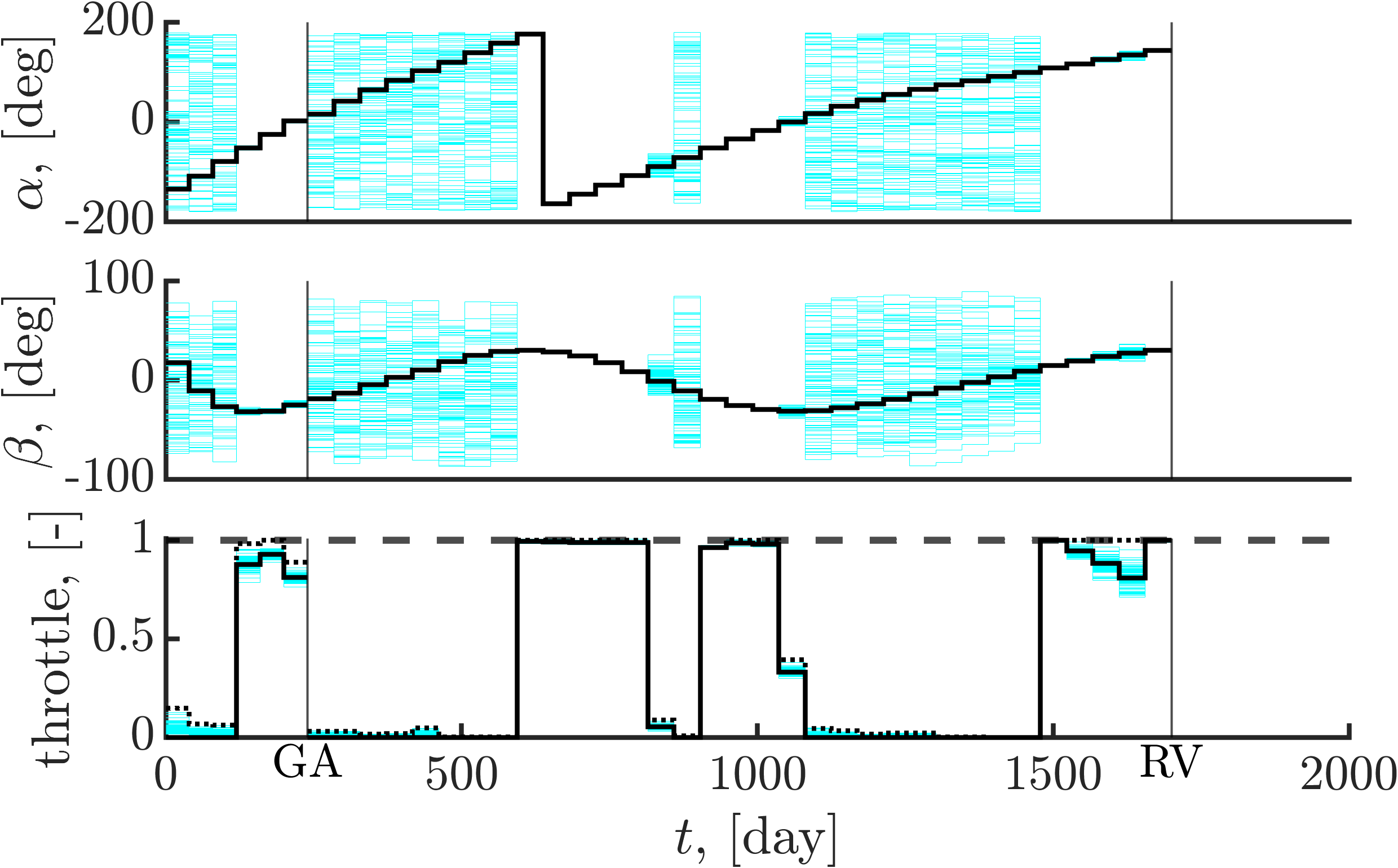}}
\centering \subfigure[\label{f:CeresStoNavMC} OD error via extended Kalman filter]
{\includegraphics[width=0.52\linewidth]{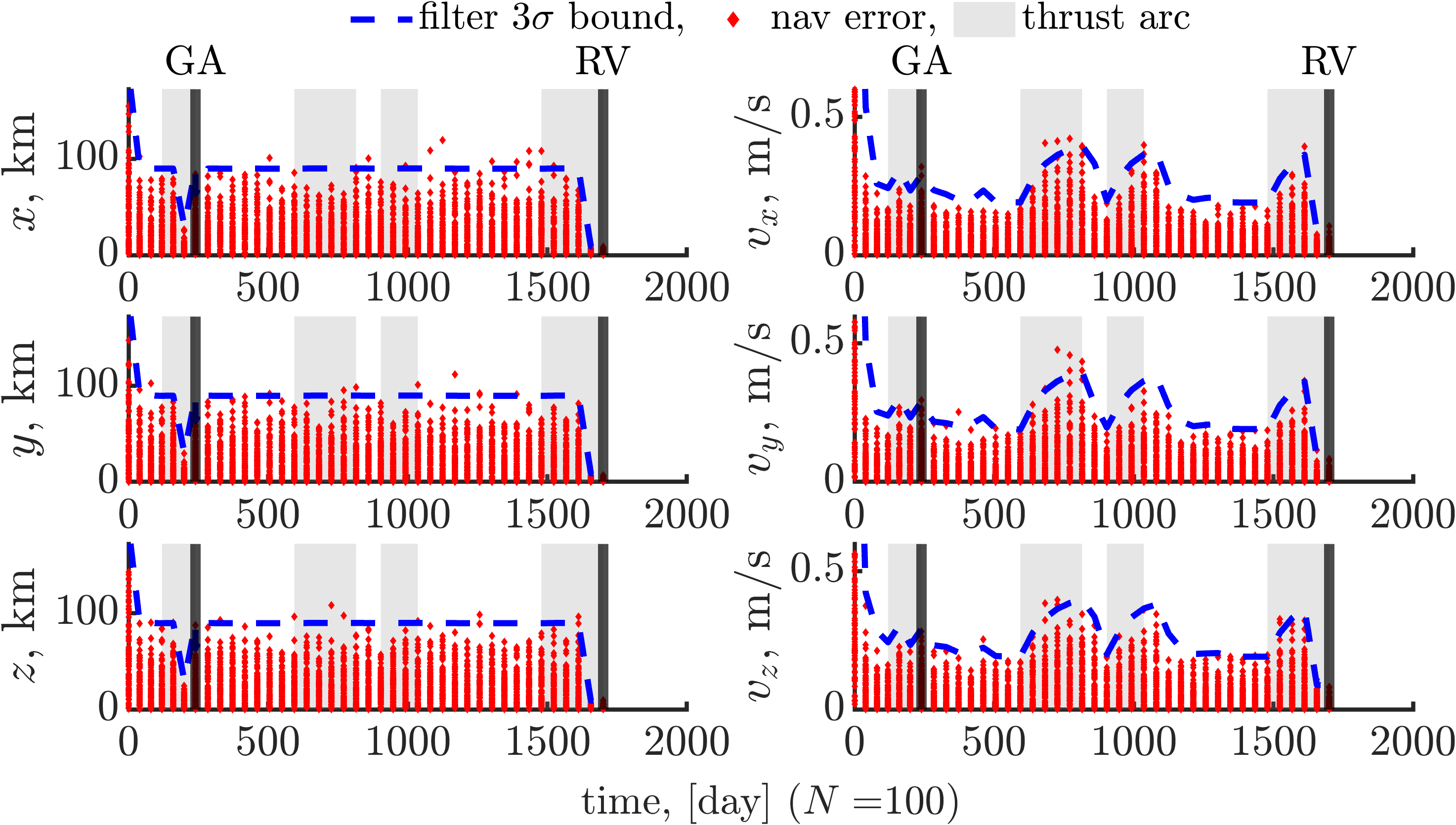}}
	\caption{\label{f:CeresStoData} Monte Carlo simulation results: Control profile and OD errors.}
\end{figure}

\cref{f:CeresStoData} display MC results in terms of the executed control profiles, which incorporates FPC and Gates execution error [\cref{f:CeresStoCtrlMC}], and the OD error time profiles based on EKF [\cref{f:CeresStoNavMC}].
\cref{f:CeresStoCtrlMC} clearly demonstrates that the maximum thrust constraint is satisfied under the action of statistical FPC, as expected from the imposed chance constraint.
\cref{f:CeresStoNavMC} shows the OD errors and three-sigma covariance bounds over time, with the thrusting arcs overlaid as gray regions, highlighting that all the OD solutions are successfully within the three-sigma bounds.
The change in the OD error levels reflects the different OD uncertainties defined in \cref{t:nav};
also, the velocity OD error is greater on thrusting arcs as expected.

\begin{figure}[tb]
\centering \subfigure[\label{f:CeresStoMGA} MGA periapsis distribution and statistics]
{\includegraphics[width=0.49\linewidth]{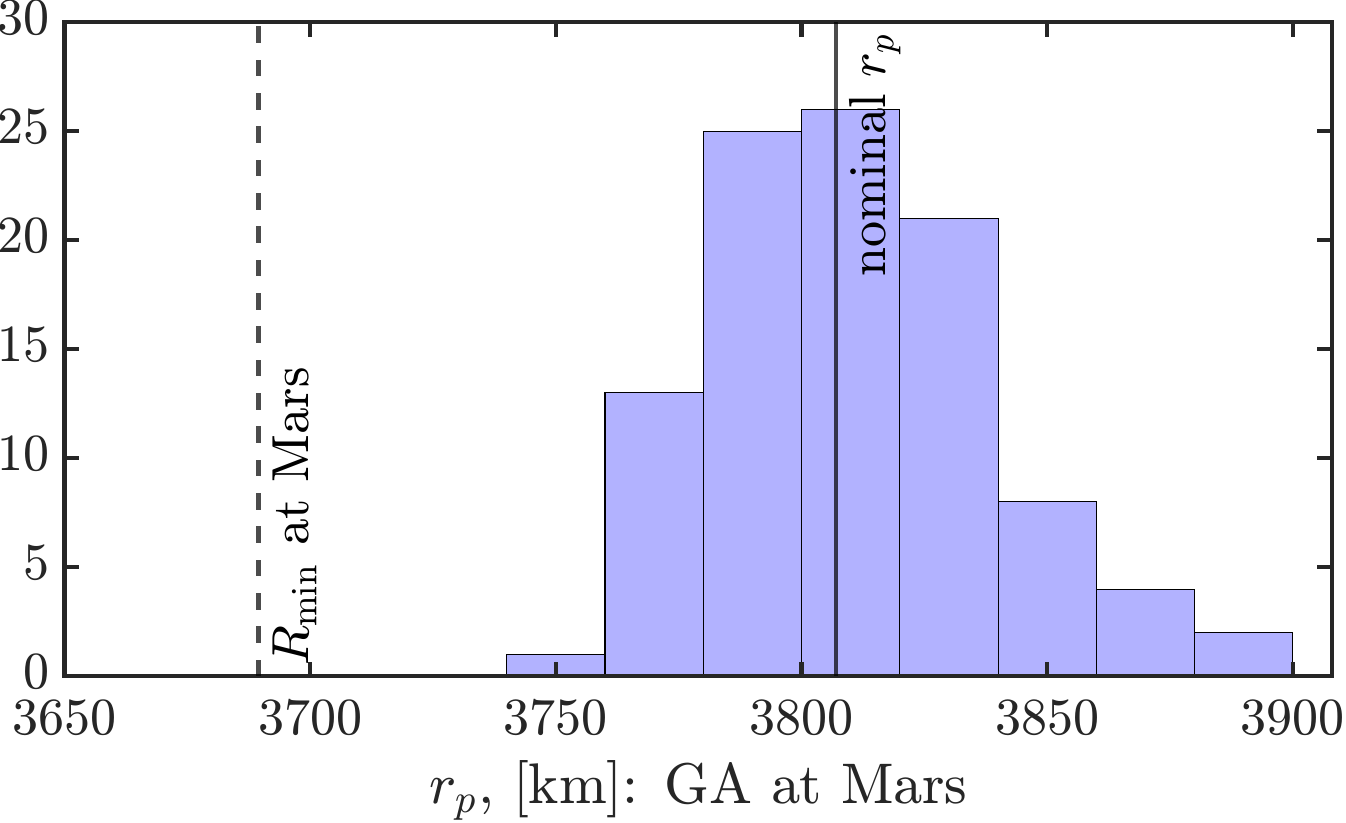}}
\centering \subfigure[\label{f:CeresStoDV} $\Delta$V distribution and statistics]
{\includegraphics[width=0.49\linewidth]{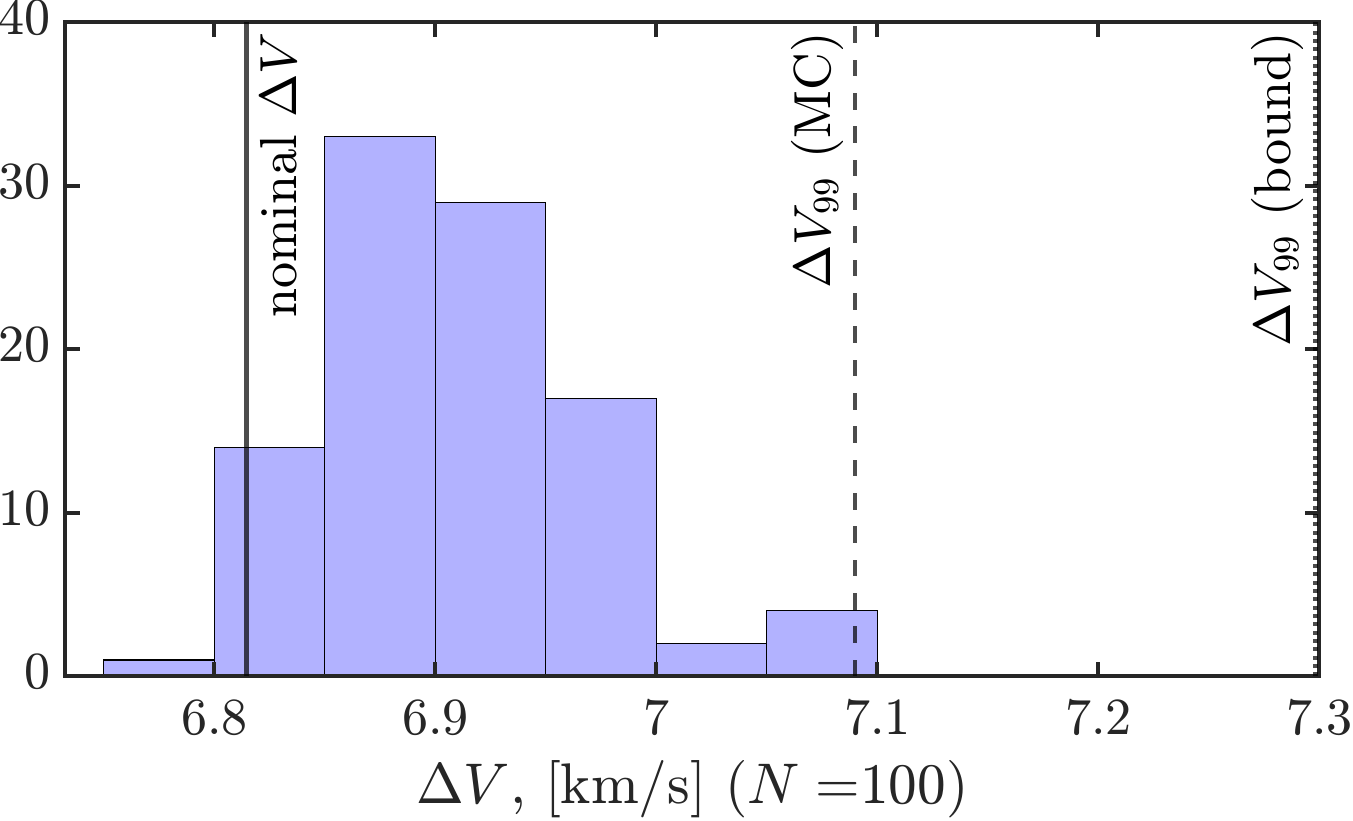}}
	\caption{\label{f:CeresStoStats} Monte Carlo simulation results: Statistics (y-axis: number of samples)}
\end{figure}

\cref{f:CeresStoStats} complies some important statistics of the MC results.
\cref{f:CeresStoMGA} shows the distribution of the periapsis radius at the Mars GA and demonstrates the satisfaction of the minimum MGA periapsis constraint under uncertainties thanks to the robust statistical FPC policies.
\cref{f:CeresStoDV} shows the distribution of the total $\Delta$V, with the nominal $\Delta$V, $\Delta$V99 calculated from MC samples, and the upper bound of $\Delta$V99, i.e., $J_{\mathrm{ub}} $ in \cref{eq:convexSOCcost}, which is minimized within the presented algorithm.
This figure shows that this bound is indeed upper-bounding the $\Delta$V99 cost, albeit with some conservativeness;
exploring different formulations that have smaller conservativeness is an interesting direction of future work.

\subsection{Discussion}

The numerical result demonstrates the effectiveness of the proposed method in designing statistically optimal, robust trajectories under state uncertainties due to launch dispersion, maneuver execution error, OD error, and imperfect dynamics modeling.
The designed trajectory satisfies critical constraints under uncertainties while minimizing the statistical cost, $\Delta$V99 (its upper bound, to be precise).
The formulation can naturally incorporate gravity assists (GAs), which underlines the flexibility of the proposed framework.

\cref{t:stats} compares the statistics of the deterministically and statistically optimal trajectories, which provides a few important observations.
First, the statistically optimal trajectory has a greater MGA altitude than the deterministic optimal counterpart, which represents a safe MGA margin automatically (and optimally) derived within the algorithmic framework of the proposed method.
This shift in the nominal MGA altitude naturally introduces a less effective MGA, which is compensated by the additional nominal $\Delta$V and largely altered thrust profiles (compare \cref{f:CeresDet_ctrl,f:CeresStoCtrl}).
This change in the nominal trajectory is also accompanied by the slight change in the optimal launch direction, about 0.5-1 degree change in the right ascension and declination, suggesting explicit incorporation of state uncertainty into mission design may affect the optimal launch condition.
In a similar vein, it is anticipated that the optimal epochs of critical events (e.g., launch, GA, arrival) would be also different for statistically and deterministically optimal trajectories;
incorporating time-varying events into the proposed SCP framework is straightforward by utilizing an adaptive-mesh SCP algorithm \cite{Kumagai2024c}.

There is one important caveat about the GA formulation presented in this paper.
Upon closer look at \cref{f:CeresStoTrajMCstate}, acute readers may wonder why the position dispersion at MGA is not reduced as much as we may intuitively expect;
one plausible reason for that is due to the simple (patched-conic) MGA modeling.
As it can be seen in \cref{eq:NLeqGA}, the effect of a GA is modeled as a rotation of the velocity relative to the body, where the rotation is parameterized by $\bm{u}_k$.
It is an important future work to consider a higher-fidelity GA model (e.g., treating a GA as a finite-duration phase centered around the body \cite{Ellison2019}) to capture the nonlinear effect more accurately, which may benefit from non-Gaussian UQ depending on the level of state uncertainty at the time of GA.

Higher-fidelity verification and validation would be also crucial when transitioning robust solutions generated by the proposed method to operation.
As discussed in Remark 1 of Ref.~\cite{Kumagai2025}, the linear FPC policy, whether \cref{eq:SOClinPolicy} or \cref{eq:FPC-stateEstimate}, can be interpreted as a sub-optimal substitute for optimization-based FPC policies.
Typical ground-based FPC for low-thrust missions re-optimizes the trajectory starting from the current state to the destination by solving the nonlinear trajectory optimization problem iteratively at a user-defined cadence (called \textit{design cycle});
for instance, an algorithm at JPL called \textit{Veil} implements this procedure \cite{Parcher2011,Lantoine2024}.
Thus, looking ahead, it will be of great importance to develop and apply Veil-like software or a simplified analysis procedure (e.g., replace the nonlinear re-optimization process in Veil with convex optimization; apply linear covariance steering to a deterministically optimal trajectory for comparison, etc).
Such analyses would further clarify benefits of the proposed approach and help bring this technology one step closer to operation with a higher technology readiness level (TRL).

Another important extension of this line of research is its application to solar-sail trajectory design.
Solar-sail missions are more likely to suffer from state uncertainties due to the unpredictable nature of the thrust forces caused by the interaction between the sunlight and sail surface, whose shape and optical properties are not easy to model and may be time-varying \cite{Dachwald2007}.
Ref.~\cite{Oguri2022d} is one of the early studies that applies a statistical approach to robust solar-sail trajectory design, which however did not demonstrate the full capability of statistical approaches, since it had to impose artificial constraints on the admissible feedback control due to the nonlinear reliance of solar-sail acceleration on the attitude variables.
An exciting avenue here is to combine a statistical approach with the recent work on a lossless control-convex formulation of solar-sail trajectory optimization problem \cite{Oguri2024c} to design statistically optimal, robust solar-sail trajectories under severe uncertainties.

%!TEX root = main.tex
\section{Conclusions}

This paper presents a mission design framework for statistically optimal trajectory design under uncertainty.
The developed framework leverages recent advances in stochastic optimal control and sequential convex programming to formulate the otherwise intractable problem into a sequence of convex problems.
The framework enables mission designers to concurrently design a reference trajectory and flight-path control plans such that minimize $\Delta$V99 while ensuring the satisfaction of mission constraints (e.g., planetary rendezvous with certain accuracy, no collision with a planet during a flyby, and more) under uncertainty due to launch dispersion, navigation (orbit determination + flight-path control) error, execution error, and stochastic disturbance.
The proposed framework is successfully applied to an Earth-Mars-Ceres transfer with an intermediate gravity assist at Mars.
Nonlinear Monte Carlo simulations demonstrate the robustness of the designed reference trajectory and flight-path control under the original, nonlinear stochastic dynamics and uncertain errors.

\section*{Acknowledgments}
The work described in this paper was carried out in part at the Jet Propulsion Laboratory, California Institute of Technology, under a contract with National Aeronautics and Space Administration.
K.Oguri's effort was supported in part by Purdue University.

\appendix
%!TEX root = main.tex
\section{Appendix}

\subsection{Proof of \cref{lemma:GAequation}: Nonlinear and Linear State Mapping under Gravity Assist} \label{proof:GAequation}
\begin{proof}
Deriving \cref{eq:NLeqGA} is straightforward by combining \cref{eq:GAcondition,eq:CayleyVinf} and noting that $[\bm{u}_k]_{\times} = V $.

To derive \cref{eq:LinEqGA}, let us begin with expressing \cref{eq:CayleyVinf} as
$\bm{v}_{k+1} - \bm{v}_p = (I_3 + [\bm{u}_k]_{\times})^{-1} (I_3 - [\bm{u}_k]_{\times})  (\bm{v}_k - \bm{v}_p)$,
which is equivalent to imposing $\bm{c}_{\mathrm{GA},v}(\bm{v}_{k+1}, \bm{v}_k, \bm{u}_k) = 0$, where
\begin{align} 
\bm{c}_{\mathrm{GA},v}(\bm{v}_{k+1}, \bm{v}_k, \bm{u}_k) = (I_3 + [\bm{u}_k]_{\times})(\bm{v}_{k+1} - \bm{v}_p) - (I_3 - [\bm{u}_k]_{\times}) (\bm{v}_k - \bm{v}_p)
\label{eq:GA-velConstraint}
\end{align}
Using the fact that $[\bm{u}_k]_{\times} \bm{a} = \bm{u}_k\times \bm{a} = - \bm{a}\times \bm{u}_k = - [{\bm{a}}]_{\times} \bm{u}_k $ for any vector $\bm{a}\in\R^3$, \cref{eq:GA-velConstraint} is equivalent to
\begin{align} 
\begin{aligned}
\bm{c}_{\mathrm{GA},v}(\bm{v}_{k+1}, \bm{v}_k, \bm{u}_k) 
&=
(\bm{v}_{k+1} - \bm{v}_k) - ([{\bm{v}}_{k+1}]_{\times} + [{\bm{v}}_{k}]_{\times} - 2[{\bm{v}}_p]_{\times})\bm{u}_k
\end{aligned}
\end{align}
Thus, its Taylor series expansion about $\bm{v}_{k+1}^*, \bm{v}_k^*, \bm{u}_k^*$ is given by [$(\cdot)^*$ indicates evaluation at the reference point]:
\begin{align} 
\bm{c}_{\mathrm{GA},v}(\bm{v}_{k+1}, \bm{v}_k, \bm{u}_k) &= 
\bm{c}_{\mathrm{GA},v}^* +
\left.\frac{\partial \bm{c}_{\mathrm{GA},v}}{\partial \bm{v}_{k+1}}\right|^* (\bm{v}_{k+1} - \bm{v}_{k+1}^*) + 
\left.\frac{\partial \bm{c}_{\mathrm{GA},v}}{\partial \bm{v}_{k}}\right|^* (\bm{v}_{k} - \bm{v}_{k}^*) + 
\left.\frac{\partial \bm{c}_{\mathrm{GA},v}}{\partial \bm{u}_{k}}\right|^* (\bm{u}_{k} - \bm{u}_{k}^*) + H.O.T.
\end{align}
where
\begin{align} 
\left.\frac{\partial \bm{c}_{\mathrm{GA},v}}{\partial \bm{v}_{k+1}}\right|^* = I_3 + [\bm{u}_k^*]_{\times}
,\quad
\left.\frac{\partial \bm{c}_{\mathrm{GA},v}}{\partial \bm{v}_{k}}\right|^* = - (I_3 - [\bm{u}_k^*]_{\times})
,\quad
\left.\frac{\partial \bm{c}_{\mathrm{GA},v}}{\partial \bm{u}_{k}}\right|^* = - ([{\bm{v}}_{k+1}^*]_{\times} + [{\bm{v}}_{k}^*]_{\times} - 2[{\bm{v}}_p]_{\times})
\end{align}
Rearranging the term, we obtain
\begin{align} 
\begin{aligned}
\bm{c}_{\mathrm{GA},v}(\bm{v}_{k+1}, \bm{v}_k, \bm{u}_k) 
=&
(I_3 + [\bm{u}_k^*]_{\times}) \bm{v}_{k+1} - (I_3 - [\bm{u}_k^*]_{\times}) \bm{v}_{k} - ([{\bm{v}}_{k+1}^*]_{\times} + [{\bm{v}}_{k}^*]_{\times} - 2[{\bm{v}}_p]_{\times}) \bm{u}_k + H.O.T.
\end{aligned}
\end{align}
Thus, to the first order, $\bm{c}_{\mathrm{GA},v}(\bm{v}_{k+1}, \bm{v}_k, \bm{u}_k) = 0$ is equivalent to
\begin{align} 
\begin{aligned}
(I_3 + [\bm{u}_k^*]_{\times}) \bm{v}_{k+1} &= (I_3 - [\bm{u}_k^*]_{\times}) \bm{v}_{k} + ([{\bm{v}}_{k+1}^*]_{\times} + [{\bm{v}}_{k}^*]_{\times} - 2[{\bm{v}}_p]_{\times}) \bm{u}_k 
\ \Leftrightarrow\ 
\\
\bm{v}_{k+1} &= (I_3 + [\bm{u}_k^*]_{\times})^{-1} (I_3 - [\bm{u}_k^*]_{\times}) \bm{v}_{k} + (I_3 + [\bm{u}_k^*]_{\times})^{-1}([{\bm{v}}_{k+1}^*]_{\times} + [{\bm{v}}_{k}^*]_{\times} - 2[{\bm{v}}_p]_{\times}) \bm{u}_k 
\end{aligned}
\end{align}
which, together with $\bm{r}_{k+1} = \bm{r}_k $, leads to $A_k, B_k, \bm{c}_k $ given above.
\end{proof}

\subsection{Proof of \cref{lemma:state-estimate-FPC}: State-estimate Feedback Control Policy Conversion} \label{proof:state-estimate-FPC}
\begin{proof}
We show that the two control policies \cref{eq:SOClinPolicy} and \cref{eq:FPC-stateEstimate} give the same filtered state trajectory.

First, recall from \cref{eq:FPC-stateEstimateGainDef} that $\hat{\mathbf{K}} = \mathbf{K}(I + \mathbf{BK})^{-1}$, where $(I + \mathbf{BK})$ is invertible since $\mathbf{B}$ is strictly block lower-triangular (see \cref{eq:blockMatrixWrittenOut}) and $\mathbf{K}$ is block lower-triangular (see Ref.~\cite{Oguri2024}).
Using Eq.~41 to 43 of Ref.~\cite{Ridderhof2020} implies that $\mathbf{K}$ and $\hat{\mathbf{K}}$ satisfy
\begin{align}
(I - \mathbf{B\hat{K}})^{-1}  = (I + \mathbf{BK})
\label{eq:equivalency-Khat-and-K}
\end{align}
where note that $\mathbf{K} $ and $\hat{\mathbf{K}} $ here correspond to $F$ and $K$ in Ref.~\cite{Ridderhof2020}, respectively.
Also, since $\mathbf{B}$ is strictly block lower-triangular and $\mathbf{K}$ is block lower-triangular (see \cref{eq:FPC-stateEstimateGainDef}), $(I - \mathbf{B\hat{K}})$ is also invertible.

Then, under the policy \cref{eq:SOClinPolicy}, the filtered state trajectory is given as (see Eq.~45 of Ref.~\cite{Oguri2024}):
\begin{align}\begin{aligned}
\hat{\bm{X}} - \lbar{\bm{X}} 
&=
(I + \mathbf{B}\mathbf{K}) [\mathbf{A}(\hat{\bm{x}}_0^- - \lbar{\bm{x}}_0) + \mathbf{L}{\bm{Y}}]
\label{eq:blockStateDeviation}
\end{aligned}\end{align}
On the other hand, under the policy \cref{eq:FPC-stateEstimate}, $(\hat{\bm{X}} - \lbar{\bm{X}})$ is given as (see Eq.~40 of Ref.~\cite{Ridderhof2020}; notation is adapted for consistency):
\begin{align}\begin{aligned}
\hat{\bm{X}} - \lbar{\bm{X}} 
&=
(I - \mathbf{B\hat{K}})^{-1} [\mathbf{A}(\hat{\bm{x}}_0^- - \lbar{\bm{x}}_0) + \mathbf{L}{\bm{Y}}]
\label{eq:blockStateDeviation2}
\end{aligned}\end{align}
\cref{eq:equivalency-Khat-and-K} implies that \cref{eq:blockStateDeviation,eq:blockStateDeviation2} represent the same filtered state trajectory, completing the proof.
\end{proof}

% \bibliographystyle{unsrt}
% \bibliography{../../../../../../../utility/latex/ref/zotero/library}

\end{document}